\title{Multidimensional multiplicative Poisson vertex algebras}
\author{Pengfei Yang\and Matteo Casati}
\date{}
\def\Z{\mathbb{Z}}	
\def\C{\mathbb{C}}	
\def\cS{\mathcal{S}}
\newcommand{\mc}[1]{\mathcal{#1}}
\def\mlambda{\boldsymbol{\lambda}}
\def\mmu{\boldsymbol{\mu}}
\def\ud{\mathrm{d}}	
\newcommand{\dev}{\partial}
\def\A{\mathcal{A}}
\def\F{\mathcal{F}}
\begin{document}

\maketitle
\begin{abstract}
In this paper we introduce the notion of multidimensional multiplicative Poisson vertex algebra, the generalization of the notion of multiplicative Poisson vertex algebra to a difference algebra endowed with $D$ commuting shifts. After showing the equivalence of this notion to the notion of Hamiltonian difference operator on a $D$-dimensional lattice, we characterize scalar local Hamiltonian difference operators up to the order $(-2,2)$ and investigate the bi-Hamiltonian pairs they form.
\end{abstract}
\section{Introduction}

Evolutionary differential-difference equations (D$\Delta$Es) are a class of systems for functions depending on two sets of variables, of which one is a continuous parameter (the time) and the other takes value on a lattice. The prototypical example of such class is the
(infinite) Volterra lattice, satisfied by a function $u(n, t) (n \in \mathbb{Z}, t \in \mathbb{R})$ such that
\begin{equation}
    \partial_tu(n,t)=u(n,t)(u(n+1,t)-u(n-1,t)).
\end{equation}
By denoting $u(n,t):=u_0=u , u(n + m, t) := u_m$, we can write the Volterra equation as
\begin{equation}
    \partial_tu=u(u_1-u_{-1}).
\end{equation}
Moreover, introducing the shift operator $\cS f(u, u_1,\dots, u_n) = f (u_1, u_2,\dots, u_{n+1})$, i.e. $\cS u_m=u_{m+1}$, this equation can be cast in Hamiltonian form for the Hamiltonian difference operator
\begin{equation}
    K=uu_1\cS-uu_{-1}\cS^{-1}
\end{equation}
and the functional $H=\int u$, so that $\delta H=1$.

More in general, let us consider the class of evolutionary equation for a set of functions $u^i(n,t)$, $i\in \{1,\dots , \mathscr{l}\}$: we call  $\mathscr{l}$ the number of \textit{components} of the system. A generic \emph{evolutionary} D$\Delta$E system for $\mathscr{l}$ functions of one (discrete) spatial and one (continuous) time variable is of the form $(u=(u^1,\dots, u^\mathscr{l}))$
\begin{equation}
    \frac{\partial}{\partial t}u^i(n,t)=F^i(\dots , \cS^{-1}u,u,\cS u,\dots )
\end{equation}
for $i=1,\dots ,\mathscr{l}$. Among this class of systems, we are particularly interested to those who can be cast in a local \textit{Hamiltonian form}, namely written as
\begin{equation}
    \partial_tu^i=\sum_{j=1}^{\mathscr{l}}K^{ij}\left(\frac{\delta H}{\delta u^j}\right)
\end{equation}
for a difference operator $K$ and a local functional $H$. The operator $K$ must be such that on the one hand it defines a Lie algebra structure (the \textit{local} Poisson bracket) on the space of local functionals, and on the other hand establishes the usual Lie algebra morphism between local functionals and the Lie
algebra of evolutionary vector fields; if this is the case, we say that $K$ is a \textit{Hamiltonian operator}.

The study of Hamiltonian operators is particularly important in the theory of integrable systems and in deformation quantisation. It is well known, for instance, that Magri \cite{magri} introduced the concept of compatible pair of Poisson brackets (the bi-Hamiltonian structure) and related it to the complete integrability of systems of partial differential equations.

The foundations of calculus for difference operators have been developed by Kuperschmidt \cite{K85}. The Hamiltonian structures
of many integrable differential-difference systems have been identified (see \cite{WangJP} and references therein).
\paragraph{Hamiltonian equations and Poisson vertex algebras.}
The study of the Hamiltonian structures of D$\Delta$Es is the main topic of this paper. The theory of multiplicative Poisson vertex algebras (PVAs) \cite{10.1093/imrn/rny242,LocalandNon-local} has provided a convenient
framework for their description. Using this formalism, a classification of $\mathscr{l}=1$ components (or scalar) difference Hamiltonian operators up to the 5th order was obtained in \cite{10.1093/imrn/rny242}; multi-component first orders operator were first studied in \cite{D89}; the classification for the 1st order, $\ell=2$ case was recently completed in \cite{CV25}.

The basic idea of the theory of PVAs, originally developed in \cite{BdSK09} to provide an algebraic framework to the theory of Hamiltonian PDEs, is endowing the space of \emph{densities} of local functionals (namely, differential polynomials in the continuous case, difference polynomials in the semi-discrete case we consider in this paper) with a bilinear operation, the $\lambda$ \emph{bracket}, which is then used to define both the Poisson bracket of local functionals and the Hamiltonian evolutionary vector fields. It can be proved that the properties enjoyed by the $\lambda$ bracket make it equivalent to the notion of Hamiltonian operator, with the advantage that the theory provides a convenient, closed form algorithm for all the standard computations, called the \emph{master formula}.

A local multiplicative Poisson Vertex Algebra \cite{10.1093/imrn/rny242} is a commutative associative algebra $\mc A$ with an automorphism $\cS$, endowed with a bilinear operation $\{-_{\lambda}-\}\colon\A\times\A\to\A[\lambda,\lambda^{-1}]$ satisfying the following set of axioms:
\begin{enumerate}
\item $\{f_{\lambda}\cS(g)\}=\lambda\,\cS(\{f_{\lambda} g\})$, $\{\cS(f)_{\lambda}g\}=\lambda^{-1}\{f_{\lambda} g\}$ (sesquilinearity)
\item $\{f_{\lambda}gh\}=\{f_{\lambda}g\}h+g\{f_{\lambda}h\}$, $\{fg_{\lambda}h\}=\{f_{\lambda\cS}h\}g+\{g_{\lambda \cS}h\}f$ (Leibniz property)
\item $\{g_{\lambda}f\}=-{}_\to\{f_{(\lambda\cS)^{-1}}g\}$ (PVA-skewsymmetry)
\item $\{f_{\lambda}\{g_{\mu}h\}\}-\{g_{\mu}\{f_{\lambda}h\}\}=\{\{f_{\lambda}g\}_{\lambda\mu}h\}$ (PVA-Jacobi identity).
\end{enumerate}
To understand the notation, first we observe that the $\lambda$ bracket takes value in the Laurent polynomials of an indeterminate variable $\lambda$ with coefficients in $\A$. Such coefficients depend on the arguments $f$ and $g$, so we can write
$$
\{f_{\lambda}g\}=\sum_{s=M}^NB_{s}(f,g)\lambda^s,\qquad M,N\in\Z.$$
The automorphism $\cS$ acts trivially on the variable $\lambda$, so that, for example, the sesquilineraity property should be interpreted as
$$
\sum_s B_s(f,\cS g)=\sum_s \cS \left(B_s(f,g)\right)\lambda^{s+1},\qquad \sum_s B_s(\cS f, g)=\sum_s B_s(f,g)\lambda^{s-1}.
$$
This property makes the automorphism $\cS$ an automorphism of the bracket too, namely $\{\cS f{}_\lambda\cS g\}=\cS\left(\{f_{\lambda}g\}\right)$. To interpret the notation we use for the Leibniz and the skewsymmetry property, observe that $\{f_{\lambda\cS}g\}h$ can be expanded as $\sum_s B_s(f,g)\lambda^s\cS^{s}h$; adopting the convention that the automorphism $\cS$ acts on the left, this is equivalent to $\sum_s \left[B_s(f,g)\cS^sh\right]\lambda^s$. Conversely, with ${}_\to\{f_{\lambda\cS}g\}$ we denote an expansion of the bracket where the shift operator acts on the coefficients, i.e.
$$
{}_\to\{f_{(\lambda\cS)^{-1}}g\}=\sum_s \left(\lambda\cS\right)^{-s}B_s(f,g).
$$
The key point of the theory of multiplicative PVA is the observation that the $\lambda$ bracket defines a Lie algebra bracket on the space $\mc F:=\A/\triangle A$, where $\triangle f=\left(\cS-1\right)f$, by
\begin{equation}\label{eq:PoissBrack-1Dcase}
\{\smallint f,\smallint g\}:=\int\left\{f_{\lambda}g\right\}\Big|_{\lambda=1}
\end{equation}
(here, $\smallint$ denotes the projection $\A\to\mc F$), and a Lie algebra morphism between $(\mc F,\{-,-\})$ and $(\mathrm{Der}^{\cS}\A,[-,-])$, the Lie algebra of the derivations of $\A$ commuting with the automorphism $\cS$. Taking as $\A$ an algebra of difference functions, and interpreting $\cS$ as the shift operator, $\mc F$ can be regarded as the formal space of local functionals and the Lie algebras we define are exactly the Hamiltonian structure for evolutionary D$\Delta$Es.

\paragraph{Main results and organization of the paper.}
In this paper, we extend the theory of multiplicative Poisson Vertex Algebra to the multidimensional case, namely to systems depending on more than one discrete variable (see Theorem~\ref{theorem1}). After developing the required theoretical foundations of the theory, we classify two-dimensional scalar local Hamiltonian difference operator up to the order $(-2,2)$ and investigate the bi-Hamiltonian pairs they form.

We obtain a more rigid classification than the one computed in \cite{10.1093/imrn/rny242} for scalar operators, summarized in Proposition~\ref{thm:ham1-2d} and \ref{prop:2D}. Indeed, if we consider only essentially multi-dimensional operators (those with shifts in both the spatial directions), the only admissible Hamiltonian ones are of the form
$$
P=f(u)\left(\sum_{i,j=1}^2\alpha^{ij}\left(\cS_i\cS_j-(\cS_i\cS_j)^{-1}\right)+\sum_{i=1}^2\beta^i(\cS_i-\cS_i^{-1})\right)\circ f(u),
$$
with $(\alpha^{ij},\beta^i)$ constant parameters. If, moreover, $f(u)$ is a nonvanishing function, then there exist a transformation of the coordinate $u$ such that the operators are constant.

This rigidity impacts on classification of the bi-Hamiltonian pairs, that we present in Proposition~\ref{prop:biHamNN}, \ref{prop:biHamN1}, and \ref{prop:biHam11}. If we disregard the essentially one-dimensional case (that is studied, as bi-Hamiltonian pairs of scalar one-dimensional operators, in \cite{10.1093/imrn/rny242}), the only compatible bi-Hamiltonian pairs are those for which the functions $f(u)$ are proportional, i.e. the bi-Hamiltonian pair is formed by two copies of the same operator, for different values of the constants. 
\paragraph{}
The paper is structured as follows: in Section 2 we present the definition of multidimensional multiplicative Poisson vertex algebra, in particular proving the existence of a master formula (in the sense of \cite{BdSK09}, \cite{C15}, and \cite{10.1093/imrn/rny242} -- respectively for ordinary PVAs, multidmensional PVAs, and multiplicative PVAs) and the relation between PVAs and Hamiltonian structures. In Section 3 we study multidimensional Hamiltonian difference operators, explicitly providing the classification of scalar 2-dimensional ones, up to the order $(-2,2)$. It should be noticed that our results extend in a straightforward way to the $D>2$ case. In Section 4 we consider pairs of Hamiltonian operators of order up to 2 and study their compatibility.

\section{Multidimensional multiplicative PVAs}
\subsection{Definition}
Let $(\mathcal{A},\cS_1,\cS_2,\ldots, \cS_D)$ be an algebra of difference functions in $\ell$ generators $\{u^i\}$, $i=1,\ldots,\ell$, endowed with $D$ commuting automorphisms $\{\cS_1,\ldots,\cS_D\}=:\boldsymbol{\cS}$. This algebra is an extension of the algebra of difference polynomials, i.e. the polynomial algebra over $\C$ in the variables $u^i_{(n_1,\ldots,n_D)}$, $i=1,\ldots,\ell$ and $n_\alpha\in\Z$, where
$$
\cS_\alpha u^i_{(n_1,\ldots,n_\alpha,\ldots,n_D)}= u^i_{(n_1,\ldots,n_\alpha+1,\ldots,n_D)}.
$$
For compactness, we adopt a multi-index notation, with upper case multi-indices $N=(n_1,n_2,\ldots,n_D)\in\Z^D$; we shall denote $E_\alpha\in\Z^D$ the vector with one in the $\alpha$-th position and 0 elsewehere, so that the previous expression becomes $\cS_\alpha u^i_N=u^i_{N+E_\alpha}$. We denote the order of the multi-index $N$ as $|N|=n_1+n_2+\cdots +n_D$.

This notion of difference algebra is a straightforward generalization to the case $D>1$ of the algebra of difference functions presented in \cite{DSKVW18-1,CW20}, and corresponds to the one introduced by Kuperschmidt in his discrete calculus of variations \cite{K85}. Note that the algebra $\A$ is endowed with commuting partial derivations $\frac{\dev}{\dev u^i_M}$ and that the partial derivatives of an element of $\A$ are vanishing for all but a finite number of (pairs of) indices $(i,M)$. The commutation relation between the shift operators and the partial derivatives is (see \cite{K85})
\begin{equation}\label{eq:commshift}
\frac{\dev }{\dev u^i_M}\cS_\alpha f=\cS_\alpha\frac{\dev f}{\dev u^i_{M-E_\alpha}}.
\end{equation}

\begin{definition}[$\lambda$-bracket]
    A local (multiplicative) $\lambda$-bracket (of rank D) on $\mathcal{A}$ is a $\mathbb{C}$-linear map
     \begin{equation*}
        \begin{aligned}
         {\{\cdot _{\boldsymbol{\lambda}}\cdot \}}:\mathcal{A}\times\mathcal{A}\rightarrow \mathbb{C}[\lambda^{\pm1}_1 &,\dots,\lambda^{\pm1}_D]\otimes \mathcal{A}\\
         (f,g)\mapsto {\{f_{\boldsymbol{\lambda}}g\}}&
        \end{aligned}
     \end{equation*} 
    which is \textit{sesquilinear},namely
    \begin{subequations}
        \begin{align}
        \{\cS_{\alpha}f_{\boldsymbol{\lambda}}g\}&=\lambda^{-1}_\alpha\{f_{\boldsymbol{\lambda}}g\}\label{1.1a}\\
        \{f_{\boldsymbol{\lambda}}\cS_{\alpha}g\}&=\lambda_{\alpha}\cS_{\alpha}\{f_{\boldsymbol{\lambda}}g\}\label{1.1b}
        \end{align}
    \end{subequations}
    and obeys, respectively, the \textit{right} and \textit{left Leibniz rule}
    \begin{subequations}
        \begin{align}
        \{f_{\boldsymbol{\lambda}}gh\}&=\{f_{\boldsymbol{\lambda}}g\}h+\{f_{\boldsymbol{\lambda}}h\}g \label{1.2a}\\
        \{fg_{\boldsymbol{\lambda}}h\}&=\{f_{\boldsymbol{\lambda \cS}}h\}g+\{g_{\boldsymbol{\lambda \cS}}h\}f .\label{1.2b}
        \end{align}
    \end{subequations}
\end{definition}

    By definition, the $\lambda$-bracket of two elements in $\A$ is a polynomial in $\lambda_1, \ldots, \lambda_D$ and their multiplicative inverse (we will often refer to the collection of $\lambda_\alpha$ as $\boldsymbol{\lambda}$ ) 
    with coefficients in $\A$. In general, we can write $\left\{f_\lambda g\right\}=a_{i_1, \ldots, i_D}(f, g) \lambda_1^{i_1} \ldots \lambda_D^{i_D}$, implying the summation over the repeated indices. Using the usual multi-index notation, this is equivalent to writing $a_I(f, g) \boldsymbol{\lambda}^I$. When, as in (\ref{3}), we write $\left\{f_{\boldsymbol{\lambda \cS}} g\right\}$, it means that the $\lambda$ product is $a_I(f, g)(\boldsymbol{\lambda \cS})^I$, 
    with the shift operators acting on the right (if nothing is written on the right, it is equivalent to the shift operators acting on 1 and thus the only term not vanishing is $\boldsymbol{\lambda}^I$).

\begin{definition}[Multidimensional Poisson Vertex Algebra]
 A local ($D$ dimensional) \textit{multiplicative Poisson Vertex Algebra} is a difference algebra $\mathcal{A}$ endowed with a multiplicative $\lambda$-bracket of rank $D$ which is \textit{skewsymmetric}
\begin{equation}
    \left\{g_{\boldsymbol{\lambda}} f\right\}=-{}_{\rightarrow}\{f_{{(\boldsymbol{\lambda \cS})}^{-1}} g\}\label{3}
\end{equation}
and satisfy the PVA-Jacobi identity
\begin{equation}
    \left\{f_{\boldsymbol{\lambda}}\left\{g_{\boldsymbol{\mu}} h\right\}\right\}-\left\{g_{\boldsymbol{\mu}}\left\{f_{\boldsymbol{\lambda}} h\right\}\right\}=\{\left\{f_{\boldsymbol{\lambda}} g\right\}_{\boldsymbol{\lambda \mu}} h\}.\label{Jacobi-identity}
\end{equation}

The notation used in (\ref{3}) means that the difference operators ($\boldsymbol{\lambda \cS}$) must be regarded as acting on the coefficient of the bracket, too; namely ${}_{\rightarrow}\left\{f_{\boldsymbol{\lambda \cS}} g\right\}$$=(\boldsymbol{\lambda \cS})^I a_I(f, g)$.
\end{definition}
\subsection{Master formula}
The $\lambda$-bracket between two generic elements of $\A$ can be expressed in terms of the $\lambda$-bracket between the so-called generators of the difference algebra, namely the variables $\left\{u^i\right\}$, ($i=1,\ldots,\ell$: we denote this index set as $I$). The explicit formula, which generalizes to $D$ dimensions the one first given in \cite{DSKVW18-1}, is called the \emph{master formula} for its role in all the computations involving $\lambda$-brackets.
\begin{theorem}[Master formula]\label{theorem1}
 Let $(f, g) \in \mathcal{A}$ and let $\{u^i\}_{i\in I}$ be the generators of a PVA. Then we have
\begin{equation}
    \{f_{\boldsymbol{\lambda}} g\}=\sum_{\substack{i, j\in I \\ M, N \in \mathbb{Z}^D}} \frac{\partial g}{\partial u_N^j}(\boldsymbol{\lambda \cS})^N\{u_{\boldsymbol{\lambda \cS}}^i u^j\}(\boldsymbol{\lambda \cS})^{-M} \frac{\partial f}{\partial u_M^i}.\label{master}
\end{equation}   

    In particular, the skewsymmetry and the PVA-Jacobi property hold if and only if the same properties for the generators hold:
\begin{equation}
    \{u^i_{\boldsymbol{\lambda}}\{u^j_{\boldsymbol{\mu}}u^k\}\}-\{u^j_{\boldsymbol{\mu}}\{u^i_{\boldsymbol{\lambda}}u^k\}\}=\{{\{u^i_{\boldsymbol{\lambda}}u^j\}}_{\boldsymbol{\lambda \mu}}u^k\}.\label{Jacobi-generators}
\end{equation}
\end{theorem}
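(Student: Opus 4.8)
The plan is to establish the master formula (\ref{master}) first, and then deduce the equivalence of the generator-level and general skewsymmetry/Jacobi identities from it. For the master formula I would perform two successive reductions, one on each argument. For the second argument, the right Leibniz rule (\ref{1.2a}) says that $\{f_{\boldsymbol{\lambda}}\,\cdot\,\}$ is a derivation of $\A$, so the chain rule for the partial derivatives $\partial/\partial u^j_N$ gives $\{f_{\boldsymbol{\lambda}}g\}=\sum_{j,N}\frac{\partial g}{\partial u^j_N}\{f_{\boldsymbol{\lambda}}u^j_N\}$; iterating sesquilinearity (\ref{1.1b}) then yields $\{f_{\boldsymbol{\lambda}}u^j_N\}=(\boldsymbol{\lambda\cS})^N\{f_{\boldsymbol{\lambda}}u^j\}$, which produces the factor $(\boldsymbol{\lambda\cS})^N$ and isolates the dependence on $g$.

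The reduction on the first argument is more delicate and I expect it to be the main technical step of this part. Here I would prove, by induction on the degree of a monomial $f$ in the variables $u^i_M$, the identity
\[
\{f_{\boldsymbol{\lambda}}h\}=\sum_{i,M}\{u^i_{\boldsymbol{\lambda\cS}}h\}(\boldsymbol{\lambda\cS})^{-M}\frac{\partial f}{\partial u^i_M}.
\]
The base case $f=u^i_M$ follows from sesquilinearity (\ref{1.1a}): the left-hand side is $\boldsymbol{\lambda}^{-M}\{u^i_{\boldsymbol{\lambda}}h\}$, which matches the right-hand side once $(\boldsymbol{\lambda\cS})^{-M}$ has acted on the constant $1$. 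For $f=f_1f_2$ I would apply the left Leibniz rule (\ref{1.2b}); the crucial bookkeeping is that the subscript $\boldsymbol{\lambda\cS}$ forces the shifts to act on the complementary factor, and since the $\cS_\alpha$ are automorphisms one may recombine $(\boldsymbol{\cS}^K\frac{\partial f_1}{\partial u^i_M})(\boldsymbol{\cS}^K f_2)=\boldsymbol{\cS}^K(\frac{\partial f_1}{\partial u^i_M}f_2)$, after which the Leibniz rule for $\partial/\partial u^i_M$ closes the induction. Composing the two reductions, with $h=u^j$ in the second, yields (\ref{master}).

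For the second statement, the ``only if'' direction is immediate, since the generators are particular elements of $\A$. For ``if'', the idea is to show that the validity of each identity propagates from the generators to all of $\A$ through the operations that generate the difference algebra. Using the Leibniz rules and sesquilinearity (equivalently, the master formula just proved), I would check that the skewsymmetry defect $\{g_{\boldsymbol{\lambda}}f\}+{}_{\rightarrow}\{f_{(\boldsymbol{\lambda\cS})^{-1}}g\}$ is compatible with taking products and shifts in either argument, so that its vanishing on all pairs of generators forces it to vanish on all pairs of elements.

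The hardest part will be the PVA-Jacobi identity (\ref{Jacobi-identity}). I would treat the Jacobiator
\[
J(f,g,h)=\{f_{\boldsymbol{\lambda}}\{g_{\boldsymbol{\mu}}h\}\}-\{g_{\boldsymbol{\mu}}\{f_{\boldsymbol{\lambda}}h\}\}-\{\{f_{\boldsymbol{\lambda}}g\}_{\boldsymbol{\lambda\mu}}h\}
\]
as a trilinear map and show, using the two Leibniz rules and sesquilinearity, that it is a derivation (up to the appropriate shift twists) in each of its three arguments; since $J$ vanishes on triples of generators by hypothesis (\ref{Jacobi-generators}), and sesquilinearity absorbs the shifts $\boldsymbol{\cS}^M$ coming from $u^i_M$, this forces $J\equiv0$ on all of $\A$. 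The genuine difficulty lies in the careful tracking of the three spectral parameters and of the shift operators inside the nested brackets — in particular, in verifying that the Leibniz property of the inner bracket is compatible with that of the outer one, so that $J$ really is a derivation in each slot. This is where the multidimensional, multi-index nature of the problem makes the computation most laborious, although no idea beyond the one-dimensional case of \cite{10.1093/imrn/rny242} is required.
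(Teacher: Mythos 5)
Your overall strategy is sound and, for the equivalence part, it is essentially the strategy of the paper; the main difference lies in which direction of the master formula you prove. You derive \eqref{master} \emph{from} the $\lambda$-bracket axioms by induction on monomials (uniqueness: any bracket satisfying sesquilinearity and the two Leibniz rules must be given by the master formula), whereas the paper takes \eqref{master} as the definition of the extension from the generators and \emph{verifies} that it satisfies sesquilinearity and both Leibniz rules (existence). Your direction is the one literally asserted by the statement; the paper's direction is the one needed later to build a PVA from a prescribed difference operator $P^{ji}_T$. A complete treatment wants both, and both rest on the same manipulations, so this is a difference of emphasis rather than of substance. One caveat for your induction: $\A$ is an algebra of difference \emph{functions}, an extension of the polynomial algebra, so induction on monomial degree does not literally reach every element; the chain-rule behaviour of $\{f_{\boldsymbol{\lambda}}\,\cdot\,\}$ on arbitrary functions of the $u^i_M$ has to be built into the definition of the extension, which is effectively what the paper's formulation does. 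For the second part, your plan coincides with the paper's Appendix: the paper first shows (exactly as you propose, via sesquilinearity) that the Jacobi identity for generators implies it for all shifted generators $(u^i_M,u^j_N,u^k_P)$, since the three nested brackets acquire a common prefactor $\boldsymbol{\lambda}^{P-M}\boldsymbol{\mu}^{P-N}\boldsymbol{\cS}^P$, and then expands the Jacobiator by the master formula to exhibit it as a sum of products of shifted partial derivatives of $f,g,h$ against the generator-level Jacobiator — i.e.\ precisely your ``trilinear derivation'' reduction, carried out explicitly. For skewsymmetry the paper does not propagate a defect through products and shifts as you suggest, but simply substitutes the coefficient identity $P^{ji}_{-T}=-\boldsymbol{\cS}^{-T}P^{ij}_T$ into the master-formula expansion of ${}_{\rightarrow}\{f_{(\boldsymbol{\lambda\cS})^{-1}}g\}$; both routes work, and the paper's is the shorter once the master formula is available.
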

\begin{proof}
    Let us denote the $\lambda$-bracket of generators as $\{u^i_{\boldsymbol{\lambda}}u^j\}=\sum_{T\in \mathbb{Z}^D} P^{ji}_{T}\mlambda^{T}$.
    The skewsymmetry of the bracket between generators is written as $\left\{u^i_{\boldsymbol{\lambda}} u^j\right\}=-{}_{\rightarrow}\{u^j_{{(\boldsymbol{\lambda \cS})}^{-1}} u^i\}$ and, after we expand the RHS as 
    \begin{equation*}
        {}_\rightarrow\{u^i_{\boldsymbol{\lambda \cS}}u^j\}=\sum_{T\in \mathbb{Z}^D}(\boldsymbol{\lambda \cS})^{T}P^{ji}_{T}=\sum_{t_i \in \mathbb{Z}}(\lambda_1 \cS_1)^{t_1}(\lambda_2 \cS_2)^{t_2}\dots(\lambda_D \cS_D)^{t_D}P^{ji}_{t_1 t_2\dots t_D}
    \end{equation*}
is equivalent to
\begin{equation}\label{eq:skewgen}
    P^{ji}_{-T}=-{\boldsymbol{\cS}}^{-T}P^{ij}_{T}.
\end{equation}

Note that the sum in Equation \eqref{master} is finite, so that it gives a well defined $\mathbb{C}$-linear map $\{-_{\boldsymbol{\lambda}}-\}: \mathcal{A} \times \mathcal{A} \rightarrow \mathbb{C}[\boldsymbol{\lambda}] \otimes \mathcal{A}$. Moreover, for $f=u^i, g=u^j, i, j \in I$, such map clearly reduces to the given Laurent polynomials $\sum_TP^{ji}_T\mlambda^T \in \mathbb{C}[\boldsymbol{\lambda}] \otimes \mathcal{A}$.

More generally, for $f=u^i_M, g=u^j_N$, equation \eqref{master} reduces to

\begin{equation}\label{2.2}
    \{{u^i_M}_{\boldsymbol{\lambda}} u^j_N\}=\boldsymbol{\lambda}^{-M}(\boldsymbol{\lambda \cS})^N\{u^i_{\boldsymbol{\lambda}} u^j\}=\sum_{T\in\Z^D}\cS^NP^{ji}_T\mlambda^{T+N-M}.
\end{equation}

It is also useful to rewrite equation \eqref{master} in the following equivalent forms, which can be checked directly:
\begin{equation}\label{6}
    \{f_{\boldsymbol{\lambda}} g\}=\sum_{j \in I, N \in \mathbb{Z}^D} \frac{\partial g}{\partial u^j_N}(\boldsymbol{\lambda \cS})^N\{f_{\boldsymbol{\lambda}} u^j\}=\sum_{i \in I, M \in \mathbb{Z}^D}{\left\{u^i_{\boldsymbol{\lambda \cS}} g\right\}}_{\rightarrow}(\boldsymbol{\lambda \cS})^{-M}\frac{\partial f}{\partial u^i_M}.
\end{equation}

We can now explicitly check that the bracket defined by the master formula \eqref{master} satisfies the properties of a $\lambda$-bracket. For the sesquilinearity property \eqref{1.1a}, \eqref{1.1b} we have
\begin{equation*}
    \begin{aligned}
        \{\cS_{\alpha}f_{\boldsymbol{\lambda}} g\}&=\sum_{\substack{i, j\in I \\ M, N \in \mathbb{Z}^D}} \frac{\partial g}{\partial u_N^j}(\boldsymbol{\lambda \cS})^N\{u_{\boldsymbol{\lambda \cS}}^i u^j\}(\boldsymbol{\lambda \cS})^{-M} \frac{\partial \cS_\alpha f}{\partial u_M^i}\\
        &=\sum_{\substack{i, j\in I\\ M, N \in \mathbb{Z}^D}} \frac{\partial g}{\partial u_N^j}(\boldsymbol{\lambda \cS})^N\{u_{\boldsymbol{\lambda \cS}}^i u^j\}(\boldsymbol{\lambda} \boldsymbol{\cS})^{-M+E_\alpha}\lambda_\alpha^{-1}\frac{\partial f}{\partial u_{M-E_\alpha}^i}
        =\lambda^{-1}_\alpha\{f_{\boldsymbol{\lambda}}g\},
    \end{aligned}
\end{equation*}
\begin{equation*}
    \begin{aligned}
        \{f_{\boldsymbol{\lambda}} \cS_{\alpha}g\}
        =S_{\alpha}\lambda_{\alpha}\sum_{\substack{i, j\in I \\ M, N \in \mathbb{Z}^D}} \frac{\partial g}{\partial u_{N-E_\alpha}^j}(\boldsymbol{\lambda \cS})^{N-E_\alpha}\{u_{\boldsymbol{\lambda S}}^i u^j\}(\boldsymbol{\lambda \cS})^{-M} \frac{\partial  f}{\partial u_M^i}
        =\lambda_{\alpha}\cS_{\alpha}\{f_{\boldsymbol{\lambda}}g\}.
    \end{aligned}
\end{equation*}

Moreover, for the left \textit{Leibniz rule} \eqref{1.2a} we have
\begin{equation*}
\left\{f_{\boldsymbol{\lambda}} g h\right\}=\sum_{j \in I, N \in \mathbb{Z}^D}\left(h \frac{\partial g}{\partial u^j_N}+g \frac{\partial h}{\partial u^j_N}\right)(\boldsymbol{\lambda \cS})^N\left\{f_{\boldsymbol{\lambda}} u^j\right\}=h\left\{f_{\boldsymbol{\lambda}} g\right\}+g\left\{f_{\boldsymbol{\lambda}} h\right\},
\end{equation*}
and similarly, for the \textit{right Leibniz rule} \eqref{1.2b}, we can use the second identity in (\ref{6}) to get:
\begin{equation*}
   \begin{aligned}
    \left\{f g_{\boldsymbol{\lambda}} h\right\}&=\sum_{i \in I, M \in \mathbb{Z}^D}{\left\{u^i_{\boldsymbol{\lambda \cS}} h\right\}}(\boldsymbol{\lambda \cS})^{-M}\left(\frac{\partial f}{\partial u^i_M} g+\frac{\partial g}{\partial u^i_M} f\right)\\
    &=\left\{f_{\boldsymbol{\lambda \cS}} h\right\} g+\left\{g_{\boldsymbol{\lambda \cS}} h\right\} f .
   \end{aligned}
\end{equation*}
To prove the equivalence between, respectively, the skewsymmetry and Jacobi identity for the generators of the PVA and for the full bracket we must rely on rather tedious computations. They can be found in Appendix~\ref{app:1} and \ref{app:2}.   
\end{proof}

\subsection{Multidimensional mPVAs and Hamiltonian structures}
Let $\triangle_\alpha=\cS_\alpha-1$. The elements of the quotient space   
\begin{align}\label{eq:locfunct-def}
    \mathcal{F} =\frac{\mathcal{A}}{{\bigtriangleup}_1 \mathcal{A}+{\bigtriangleup}_2 \mathcal{A}+\dots+{\bigtriangleup}_D \mathcal{A}} 
\end{align}
are called \emph{local functionals}. We denote the projection map from $\mathcal{A}$ to $\mathcal{F}$ as a formal integral, which associates to $f \in \mathcal{A}$ the elements $F :=\int f$ in $\mathcal{F}$. We will sometimes denote the equivalence relation as $ f\sim g$ if and only if $\int(f-g)=0$. Observe that, in particular, we have $\cS_\alpha f\sim f$.

The variational derivative of a local functional $F =\int f$ is defined as
\begin{equation}\label{eq:varder}
    \frac{\delta F}{\delta u^i}={\delta}_{u^i}F:=\sum_{N \in \mathbb{Z}^D} \boldsymbol{\cS}^{-N}\frac{\partial f}{\partial u^i_N} .
\end{equation}
\begin{proposition}\label{thm:var-welldef} The variational derivative \eqref{eq:varder} is well-defined in $\mc F$, namely $\delta _u(\int f)=\delta_u(\int g)$ if $\int(f-g)=0$.\end{proposition}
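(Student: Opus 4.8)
The plan is to exploit the $\mathbb{C}$-linearity of the variational derivative in order to reduce the claim to a single generating computation. Since $\int(f-g)=0$ means precisely that $f-g\in\triangle_1\A+\cdots+\triangle_D\A$, and since $\delta_{u^i}$ is $\mathbb{C}$-linear, it will suffice to show that $\delta_{u^i}(\triangle_\alpha h)=0$ for every $h\in\A$, every generator index $i$, and every direction $\alpha\in\{1,\dots,D\}$. Writing $\triangle_\alpha=\cS_\alpha-1$, this is equivalent to proving $\delta_{u^i}(\cS_\alpha h)=\delta_{u^i}(h)$, i.e.\ that the variational derivative is insensitive to each of the shifts.

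The key step is the evaluation of $\delta_{u^i}(\cS_\alpha h)$ by means of the commutation relation \eqref{eq:commshift} between partial derivatives and shifts. First I would write
\begin{equation*}
\delta_{u^i}(\cS_\alpha h)=\sum_{N\in\Z^D}\boldsymbol{\cS}^{-N}\frac{\partial(\cS_\alpha h)}{\partial u^i_N}
=\sum_{N\in\Z^D}\boldsymbol{\cS}^{-N}\cS_\alpha\frac{\partial h}{\partial u^i_{N-E_\alpha}},
\end{equation*}
where the second equality is exactly \eqref{eq:commshift}. Because the shifts commute, $\boldsymbol{\cS}^{-N}\cS_\alpha=\boldsymbol{\cS}^{-(N-E_\alpha)}$, so after the change of summation index $M=N-E_\alpha$ (a bijection of $\Z^D$ onto itself) the right-hand side becomes $\sum_{M}\boldsymbol{\cS}^{-M}\,\partial h/\partial u^i_M=\delta_{u^i}(h)$, as desired.

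The only point that needs care — and the one I would state explicitly — is the legitimacy of reindexing the sum over $\Z^D$. This is guaranteed by the standing assumption that, for any element of $\A$, the partial derivatives $\partial h/\partial u^i_M$ vanish for all but finitely many pairs $(i,M)$; hence every sum above is in fact finite and the translation $N\mapsto N-E_\alpha$ merely permutes its nonzero terms. I do not expect any genuine obstacle beyond this bookkeeping: the statement amounts to the fact that each total difference $\triangle_\alpha$ lies in the kernel of $\delta_{u^i}$, the discrete, multidimensional analogue of the familiar vanishing of the variational derivative of a total derivative.
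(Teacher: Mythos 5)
Your proof is correct and follows essentially the same route as the paper: the core step in both is the identity $\delta_{u^i}(\triangle_\alpha h)=0$, obtained from the commutation relation \eqref{eq:commshift} followed by the reindexing $N\mapsto N-E_\alpha$, which is legitimate because only finitely many partial derivatives are nonzero. The only difference is that you close the argument immediately by $\mathbb{C}$-linearity over the decomposition $f-g=\sum_\alpha\triangle_\alpha h_\alpha$, whereas the paper additionally verifies $\delta_{u^i}((\boldsymbol{\cS}^M-1)f)=0$ by an iterative reduction in $M$ --- a step that your linearity reduction already subsumes.
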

\begin{proof}
With a slight abuse of notation, we use the same symbol $\delta_u f$ to denote the RHS of \eqref{eq:varder} as an operator $\A\to\A$. First, we prove that for any $f \in \mathcal{A}$ and $\alpha=1,\ldots,D$, we have
    \begin{equation}\label{eq:kerTalpha}
        \frac{\delta}{\delta u^i}\bigtriangleup _{\alpha}f=0.
    \end{equation}
By definition of $\triangle_\alpha$ and of variational derivative,
\begin{align*}
        \frac{\delta}{\delta u^i}\bigtriangleup _{\alpha}f&= \frac{\delta}{\delta u^i}\cS_{\alpha}f- \frac{\delta}{\delta u^i}f\\
        &= \sum_{N \in \mathbb{Z}^D} \boldsymbol{\cS}^{-N}\frac{\partial \cS_{\alpha}f}{\partial u^i_N}-\sum_{N \in \mathbb{Z}^D} \boldsymbol{\cS}^{-N}\frac{\partial F}{\partial u^i_N}.
\end{align*}
Using \eqref{eq:commshift} in the first of the two summations, we can rewrite it as
$$
\sum_{N \in \mathbb{Z}^D} \boldsymbol{\cS}^{-N}\cS_{\alpha}\frac{\partial f}{\partial u^i_{N-E_{\alpha}}}=\sum_{N \in \mathbb{Z}^D} \boldsymbol{\cS}^{-(N-E_{\alpha})}\frac{\partial f}{\partial u^i_{N-E_{\alpha}}},
$$
which is manifestly equal to the second one; therefore, their difference vanishes. We have observed before that $\cS_\alpha f\sim f$, from which -- repeatedly applying the same property -- $\boldsymbol{\cS}^Mf\sim f$ for any $M=(m_1,m_2,\ldots,m_D)\in\Z^D$. To complete the proof of the proposition, we need to show that $\delta_{u^i}((\boldsymbol{\cS}^M-1)f)=0$. First of all, we observe that, for any $M$ and for any $\alpha$,
\begin{align*}
(\boldsymbol{\cS}^M-1)f&=(\cS_\alpha-1)\boldsymbol{\cS}^{M-E_\alpha}f+(\boldsymbol{\cS}^{M-E_\alpha}-1)f\\
&=-(\cS_\alpha-1)\boldsymbol{\cS}^Mf+(\boldsymbol{\cS}^{M+E_\alpha}-1)f.
\end{align*}
Applying the variational derivative to either side and taking \eqref{eq:kerTalpha} into account, we have $\delta_{u^i}(\boldsymbol{\cS}^M-1)f=\delta_{u^i}(\boldsymbol{\cS}^{M'}-1)f$, with $M'=M\pm E_\alpha$. This operation can be repeated a finite number of times for all the values of the index $\alpha$, until we get to $M^{(k)}=(0,\ldots,0)$, so $\delta_{u^i}(\boldsymbol{\cS}^M-1)f=\delta_{u^i}(1-1)f=0$.
\end{proof}

In the one-dimensional case, the Poisson bracket among local densities is related to a $\lambda$-bracket by the relation \eqref{eq:PoissBrack-1Dcase} \cite{10.1093/imrn/rny242}. Similarly, a multidimensional multiplicative PVA defines a Hamiltonian structure on the space $\mc F$ defined as \eqref{eq:locfunct-def}, given by 
\begin{equation}\label{eq:PoissBrack-def}
   \left \{\smallint f,\smallint g\right\}:=\int{\{f_{\boldsymbol{\lambda}}g\}}|_{\boldsymbol{\lambda}=1} \quad f,g\in \mathcal{A}.
\end{equation}
By $\mlambda=1$ we mean $(\lambda_1=1,\lambda_2=1,\ldots,\lambda_D=1)$. For simplicity, we introduce the notation 
\begin{equation}\label{4.2}
\{f,g\}={\{f_{\boldsymbol{\lambda}}g\}}|_{\boldsymbol{\lambda}=1}
\end{equation}
to denote the corresponding operation $\A\times\A\to\A$, disregarding the formal integration.
\begin{theorem}\label{thm:PVAHam}
    Let $\mathcal{A}$ be an algebra of differential polynomials with a $\lambda$-bracket 
    and consider the bracket on $\A$ defined in \eqref{4.2}. Then
    \begin{enumerate}[label=(\alph*)]
    \item The bracket \eqref{4.2}  induces a well-defined bracket on the quotient space $\mathcal{F}$;
    \item If the $\lambda$-bracket satisfies the axioms of a PVA, then the induced bracket
    on $\mathcal{F}$ is a Lie bracket.
    \item If the $\lambda$-bracket satisfies the axioms of a PVA, then the bracket \eqref{4.2} also induces a well-defined Lie algebra morphism from $\mathcal F$ to the evolutionary derivations of $\A$, given by
    \begin{align*}
    \smallint h\mapsto X_H\colon\A&\to \A\\
    f&\mapsto\{\smallint h,f\}
    \end{align*}
    such that $X_{\{F,G\}}=[X_F,X_G]$.
    \end{enumerate}
\end{theorem}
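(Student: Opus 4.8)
The plan is to derive all three statements from the $\lambda$-bracket axioms via the single operation of setting every spectral variable to $1$, integrating only where a total difference must be killed. Throughout I write $\{f_{\mlambda}g\}=\sum_T a_T(f,g)\mlambda^T$, so $\{f,g\}=\sum_T a_T(f,g)$. For part (a), evaluating the sesquilinearity relations \eqref{1.1a} and \eqref{1.1b} at $\mlambda=1$ gives $\{\cS_\alpha f,g\}=\{f,g\}$ and $\{f,\cS_\alpha g\}=\cS_\alpha\{f,g\}$. The first shows $\{\triangle_\alpha f,g\}=0$ already \emph{in $\A$}; the second gives $\{f,\triangle_\alpha g\}=\triangle_\alpha\{f,g\}$, whose formal integral vanishes. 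Since a class in $\F$ is determined modulo $\sum_\alpha\triangle_\alpha\A$, and by bilinearity it suffices to modify an argument by a single $\triangle_\alpha\phi$, these two facts show that $\int\{f,g\}$ is insensitive to the choice of representatives; hence \eqref{4.2} descends to the well-defined bracket \eqref{eq:PoissBrack-def} on $\F\times\F\to\F$.

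For part (b), skewsymmetry on $\F$ comes from \eqref{3}: at $\mlambda=1$ it reads $\{g,f\}=-\sum_T\boldsymbol{\cS}^{-T}a_T(f,g)$, and since $\boldsymbol{\cS}^{-T}\phi\sim\phi$ for all $\phi$, integration gives $\int\{g,f\}=-\int\{f,g\}$. For the Jacobi identity I would first prove the $\A$-level identity
\begin{equation}\label{eq:Alevel-jacobi}
\{f,\{g,h\}\}-\{g,\{f,h\}\}=\{\{f,g\},h\}
\end{equation}
by setting $\mlambda=\mmu=1$ in \eqref{Jacobi-identity}. Writing $\{g_{\mmu}h\}=\sum_T b_T\mmu^T$, $\C$-linearity in the second slot gives $\{f_{\mlambda}\{g_{\mmu}h\}\}=\sum_T\mmu^T\{f_{\mlambda}b_T\}$, which at $\mlambda=\mmu=1$ collapses to $\sum_T\{f,b_T\}=\{f,\{g,h\}\}$, and similarly for the second term; on the right of \eqref{Jacobi-identity} the coupled parameter satisfies $\mlambda\mmu=1$ when $\mlambda=\mmu=1$, so $\{\{f_{\mlambda}g\}_{\mlambda\mmu}h\}$ reduces to $\{\{f,g\},h\}$. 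Integrating \eqref{eq:Alevel-jacobi} yields the Jacobi identity on $\F$, which together with skewsymmetry makes $(\F,\{\cdot,\cdot\})$ a Lie algebra.

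For part (c), I would check that $X_H\colon f\mapsto\{h,f\}$ is well defined and evolutionary: independence of the representative $h$ of $\smallint h$ is the $\A$-level vanishing $\{\triangle_\alpha\phi,f\}=0$ from part (a); the left Leibniz rule \eqref{1.2a} at $\mlambda=1$ gives $\{h,fg\}=\{h,f\}g+f\{h,g\}$, so $X_H$ is a derivation; and \eqref{1.1b} at $\mlambda=1$ gives $X_H(\cS_\alpha f)=\cS_\alpha X_H(f)$, so $X_H\in\der^{\boldsymbol{\cS}}\A$. The assignment $\smallint h\mapsto X_H$ is $\C$-linear, and the morphism property is exactly \eqref{eq:Alevel-jacobi} read with an arbitrary third argument $k$: for $F=\smallint f$, $G=\smallint g$ one has $X_{\{F,G\}}(k)=\{\{f,g\},k\}$ and $[X_F,X_G](k)=\{f,\{g,k\}\}-\{g,\{f,k\}\}$, which coincide (both sides manifestly lying in $\der^{\boldsymbol{\cS}}\A$).

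The main obstacle is the bookkeeping behind \eqref{eq:Alevel-jacobi}: one must make the conventions for the composite brackets $\{f_{\mlambda}\{g_{\mmu}h\}\}$ and $\{\{f_{\mlambda}g\}_{\mlambda\mmu}h\}$ fully explicit and verify that the evaluation $\mlambda=\mmu=1$ genuinely commutes with their formation. The delicate term is the right-hand side of \eqref{Jacobi-identity}, where the outer spectral parameter is the coupled product $\mlambda\mmu$ and the monomials $\mlambda^T$ arising from $\{f_{\mlambda}g\}$ must be handled as scalar coefficients; the feature that rescues the computation, and that I would stress, is that $\mlambda=\mmu=1$ forces $\mlambda\mmu=1$ as well, so all spectral dependence vanishes simultaneously and the full PVA-Jacobi identity collapses to the purely algebraic \eqref{eq:Alevel-jacobi}. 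Once this is in place, parts (b) and (c) share the same computational core and nothing further is needed.
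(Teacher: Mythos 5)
Your proposal is correct and follows essentially the same route as the paper: part (a) via sesquilinearity at $\mlambda=1$ (exact vanishing in the first slot, a total difference in the second), part (b) by evaluating PVA-skewsymmetry and the PVA-Jacobi identity at $\mlambda=\mmu=1$ and using $\boldsymbol{\cS}^{-T}\phi\sim\phi$, and part (c) from the Leibniz rule, sesquilinearity, and the same $\A$-level Jacobi identity. Your extra care about why evaluation at $\mlambda=\mmu=1$ commutes with forming the composite brackets (the coefficients $\mlambda^T$ of the inner bracket being scalars for the outer $\mlambda\mmu$-bracket) is a point the paper passes over silently, but it does not change the argument.
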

\begin{proof}
    Part (a). From the property of sesquilinearity we have that, for any $\alpha=1,...D$,
    \begin{align}\label{eq:sesqui1}
        \{f+(\cS_{\alpha}-1)h,g\}&=(\{f_{\boldsymbol{\lambda}}g\}+{\lambda}^{-1}_{\alpha}\{h_{\boldsymbol{\lambda}}g\}-\{h_{\boldsymbol{\lambda}}g\})|_{\boldsymbol{\lambda}=1}\\ 
        &=\{f,g\}
    \end{align}
    \begin{align*}
        \{f,g+(\cS_{\alpha}-1)h\}&=(\{f_{\boldsymbol{\lambda}}g\}+{\lambda}_{\alpha}\cS_{\alpha}\{f_{\boldsymbol{\lambda}}h\}-\{h_{\boldsymbol{\lambda}}h\})|_{\boldsymbol{\lambda}=1}\\
        &=\{f,g\}+(\cS_{\alpha}-1)\{f,h\}\sim \{f,g\}.
    \end{align*}
    Part (b). The Jacobi property for the bracket follows immediately by setting
    $\mlambda = \mmu = (1,\ldots,1)$ in the Jacobi identity for the PVA \eqref{Jacobi-identity}, while the skewsymmetry is a consequence of the
    skewsymmetry for the $\lambda$-bracket. Indeed, we have
    \begin{align*}
        \{g,f\}=\{g_{\boldsymbol{\lambda}}f\}|_{\boldsymbol{\lambda}=1}&=-{}_{\rightarrow}\{f_{(\boldsymbol{\lambda S})^{-1}}g\}|_{\boldsymbol{\lambda}=1}.\\
        \intertext{Recalling that in $\F$ all the shifts acting on the bracket are equivalent to no-shift,}
        &\sim-\{f_{(\boldsymbol{\lambda})^{-1}}g\}|_{\boldsymbol{\lambda}=1}\\
        &= -\{f,g\}.
    \end{align*}
 Part (c). First, observe that the bracket \eqref{4.2} is well defined in $\A$ when its first entry takes values in $\mathcal{F}$, as it follows from \eqref{eq:sesqui1}. Then, from \eqref{1.2a} we have that \eqref{4.2} is a derivation of its second argument, and from \eqref{1.1b} that it commutes with the shift operators, i.e. $\{\smallint h,\cS_\alpha f\}=\cS_\alpha\{\smallint h,f\}$. Finally, let us observe
 \begin{align*}
 X_{\{F,G\}}(h)&=\left\{\left\{\smallint f,\smallint g\right\},h\right\}&\left[X_F,X_G\right]&=\left\{\smallint f,\left\{\smallint g,h\right\}\right\}-\left\{\smallint g,\left\{\smallint f,h\right\}\right\}\\
 &=\left.\left\{\smallint\left\{f_{\mlambda}g\right\}\,{}_{\mlambda\mmu}h\right\}\right|_{\mlambda=1\atop\mmu=1}&&=\left.\left(\{f{}_{\mlambda}\{g_{\mmu}h\}\}-\{g{}_{\mmu}\{f{}_{\mlambda}h\}\}\right)\right|_{\mlambda = 1\atop\mmu = 1}
 \end{align*}
 On the one hand, the first expression is equal to the same one without the integral; and this latter one and the second expression are equal because of \eqref{Jacobi-identity}.
\end{proof}
Theorem~\ref{thm:PVAHam} establishes an equivalence between the notion of (multidimensional) multiplicative Poisson vertex algebra and local difference Hamiltonian operators that we will detail in the next Section. This is the main motivation for the introduction of the theory of multiplicative PVAs.
\section{Multidimensional scalar difference operators}
We denote by $\mathcal{M}_{\mathscr{l}}(\mathcal{A})=Mat_{\mathscr{l}\times \mathscr{l}}(\A[\boldsymbol{\cS},\boldsymbol{\cS}^{-1}])$ the algebra of (local) matrix difference
operators. Elements of $\mathcal{M}_{\mathscr{l}}(\mathcal{A})$ are Laurent polynomials
\begin{equation}\label{eq:op}
    P(\boldsymbol{\cS})=\sum_{l_1=m_1}^{n_1}\sum_{l_2=m_2}^{n_2}\cdots\sum_{l_D=m_D}^{n_D}P_{l_1,l_2,\ldots,l_D}\cS_1^{l_1}\cS_2^{l_2}\cdots\cS_D^{l_D}=\sum_{L\in [M,N]}P_L\boldsymbol{\cS}^L,
\end{equation}
$P_L=(P_L^{ij})_{i,j=1}^\ell\in Mat_{\ell\times\ell}(\A)$, with the associative product $\circ$ defined by the relation
\begin{equation}
    \cS\circ A=\cS (A)\cS, \quad A\in \mathcal{M}_{\mathscr{l}}(\mathcal{A}).
\end{equation}
We say that $P(\boldsymbol{\cS})$  as in \eqref{eq:op} is a difference operator of order $(A,B)$ if $\min m_\alpha=|M|=A$ and $\max n_\alpha=|N|=B$.  This reproduces and generalizes to the multi-dimensional case the standard definition for difference operators (see \cite{CW20}).

A difference operator of this class defines a bilinear map $\mathcal{F} \times \mathcal{F} \rightarrow \mathcal{F}$ by the formula 
\begin{equation}
\begin{split}
    B(F,G)&=\int \frac{\delta F}{\delta u} \cdot P\left(\frac{\delta G}{\delta u}\right)\\
    &=\int\sum_{i,j=1}^\ell\sum_{L\in\Z^D}\frac{\delta F}{\delta u^i} P^{ij}_L\boldsymbol{\cS}^L\left(\frac{\delta G}{\delta u^j}\right).
\end{split}
\end{equation}
The \textit{adjoint} of a difference operator $P(\cS)=(P^{ij}(\cS))$ of the form \eqref{eq:op} is the difference operator $P^*(\boldsymbol{\cS})=((P^*)^{ij}(\boldsymbol{\cS}))^\ell_{i,j=1}$, where
\begin{equation}
      (P^*)^{ij}(\cS)=\sum_{L\in[M,L]}\cS^{-L} \circ P^{ji}_L.
\end{equation}
If $P$ is skewadjoint, means $P^*(\cS)=-P(\cS)$, then the bilinear map $B$ is skewsymmetric and vice versa. A local difference operator of the form \eqref{eq:op} is skewadjoint if and only if $m_\alpha=-n_\alpha$ for alla $\alpha=1,\ldots,D$  and 
\begin{equation}
    P^{ij}_{-L}=-\cS^{-L}P^{ji}_L,
\end{equation}
for every $i,j=1,\dots, \mathscr{l}$, and $L$ between $(0,0,\ldots,0)$ and $N$. In particular, this means that $P_{\boldsymbol{0}}$ is a skewsymmetric matrix.

A  local Hamiltonian  ($\mathscr{l} \times \mathscr{l}$ ) matrix  difference operator $P^{i j}(\boldsymbol{\cS}) $ defines a multidimensional multiplicative PVA by letting
\begin{equation}
    (\{u^i_\lambda u^j\})^\mathscr{l}_{i,j=1}:=P^{ji}|_{\boldsymbol{\cS} \rightarrow \mlambda}=P^{ji}_L\mlambda^L,
\end{equation}
then extending the bracket from the generators of $\mathcal{A}$ to the full algebra according to properties \eqref{1.1a}-\eqref{1.2b} .
The expression for the bracket on the full algebra $\mathcal{A}$ is called the master formula (\cite{10.1093/imrn/rny242}) and it has the form:
\begin{equation}\label{eq:lambda-bracket-P}
    \{f_{\mlambda} g\}=\sum^\mathscr{l}_{i,j=1}\sum_{M,N\in \Z^D} \frac{\partial g}{\partial u^j_M}(\mlambda\boldsymbol{\cS})^M \{u^i_{\mlambda \boldsymbol{\cS}}u^j\}(\mlambda \cS)^{-N}\frac{\partial f }{\partial u_N^i} 
\end{equation}
where
\begin{equation}
    \{u^i_{\mlambda\boldsymbol{ \cS}} u^j\}=P^{ji}(\mlambda \boldsymbol{\cS})=\sum_{L\in[M,N]}P^{ji}_L\mlambda^L\boldsymbol{\cS}^L
\end{equation}
In particular, the condition of being an Hamiltonian operator for $P$ is equivalent the PVA-Jacobi identity for any triple of generators $(u^i,u^j,u^k)$.

Having established a convenient framework for the study and the manipulation of multidimensional difference Hamiltonian operators, now we focus on the study of \emph{scalar} operators, or equivalently multidimensional multiplicative Poisson vertex algebras generated by $\ell=1$ variable $u=u_{\boldsymbol{0}}$. In this case, we can denote the ``spatial indices'' with latin letters without ambiguity ($\boldsymbol{\cS}=\{\cS_i\}_{i=1}^D$).
Our first result is the classification of operators of order $(-1,1)$ and $(-2,2)$ in dimension 2.

\subsection{$(-1,1)$-order}
Generically, a $D$-dimensional, scalar skew-adjoint difference operator of order $(-1,1)$ is of the form
$$
P=\sum_{i=1}^D\left(f_i \cS_i-\cS_i^{-1}\circ f_i\right),
$$
where $f_i$, $i=1,\ldots,D$ are elements of $\A$. In this paragraph we focus on the case $D=2$; our result extends in a straightforward way to any $D>1$. More concretely, let us consider skew-adjoints operators of the general form
\begin{equation}\label{eq:op-1}
P=f(u,\ldots)\cS_1+g(u,\ldots)\cS_2-\cS_2^{-1}\circ g(u,\ldots)-\cS_1^{-1}\circ f(u,\ldots).
\end{equation}

\begin{proposition}\label{thm:ham1-2d}
A $(-1,1)$-order difference operator of the form \eqref{eq:op-1} is Hamiltonian if and only if it is of the form
\begin{equation}\label{eq:op-1-normal}
P=f(u)\left(\alpha \cS_1+\beta \cS_2-\beta \cS_2^{-1}-\alpha \cS_1^{-1}\right)\circ f(u)
\end{equation}
for arbitrary constant values of $\alpha,\beta$ and a generic function of one variable $f(u)$.
\end{proposition}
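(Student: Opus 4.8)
The plan is to use the dictionary between Hamiltonian operators and multiplicative PVAs set up in Theorem~\ref{thm:PVAHam}. The operator \eqref{eq:op-1} is skew-adjoint by construction, so the associated generator bracket automatically obeys the PVA-skewsymmetry \eqref{3} (equivalently, its symbol satisfies $P_{-L}=-\boldsymbol{\cS}^{-L}P_{L}$); hence $P$ is Hamiltonian \emph{if and only if} the PVA-Jacobi identity \eqref{Jacobi-identity} holds. Since the algebra is generated by the single variable $u$, Theorem~\ref{theorem1} reduces this to the scalar identity \eqref{Jacobi-generators} on $u$ alone. I would therefore begin by reading off the symbol of \eqref{eq:op-1} as
\[
\{u_{\mlambda}u\}=f\,\lambda_1+g\,\lambda_2-(\cS_1^{-1}f)\,\lambda_1^{-1}-(\cS_2^{-1}g)\,\lambda_2^{-1},
\]
treating $f,g\in\A$ as \emph{a priori} arbitrary difference functions, and take the verification of \eqref{Jacobi-generators} as the single condition to impose.

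Next I would compute the three terms of \eqref{Jacobi-generators} explicitly. Using the reduced master formula \eqref{6}, each nested bracket $\{u_{\mlambda}\{u_{\mmu}u\}\}$ expands as $\sum_{N}(\partial g/\partial u_N)(\mlambda\boldsymbol{\cS})^{N}\{u_{\mlambda}u\}$ applied to the coefficients $f,g,\cS_1^{-1}f,\cS_2^{-1}g$, while the composite term $\{\{u_{\mlambda}u\}_{\mlambda\mmu}u\}$ is obtained from the master formula on the $\A$-valued coefficients of $\{u_{\mlambda}u\}$ with the combined spectral variable $\mlambda\mmu$. The outcome is a single Laurent polynomial $E(\lambda_1,\lambda_2,\mu_1,\mu_2)$ with coefficients in $\A$ that must vanish identically, each monomial yielding one scalar equation relating $f$, $g$, their shifts, and the derivatives $\partial f/\partial u_N$, $\partial g/\partial u_N$. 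I would organise these equations into a pure-$\cS_1$ sector (monomials in $\lambda_1,\mu_1$ only), a pure-$\cS_2$ sector, and a mixed sector coupling the two directions.

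In the two pure sectors the analysis mirrors the one-dimensional first-order case: a degree (order) analysis of the extreme monomials first forces $f$ to depend only on $u$ and $u_{E_1}$ and $g$ only on $u$ and $u_{E_2}$, after which the pure-sector equations reproduce the one-dimensional Jacobi constraints and impose the factorisations $f=\alpha\,\phi(u)\,\cS_1\phi(u)$ and $g=\beta\,\psi(u)\,\cS_2\psi(u)$ with $\alpha,\beta$ constant. The decisive, genuinely two-dimensional input — which I expect to be the main obstacle — is the mixed sector: these equations couple $(\partial_u f)\,g$ against $(\partial_u g)\,f$ together with their shifts, and I expect them to force $\psi=\phi$ up to a scale absorbable into $\alpha,\beta$, so that \emph{both} coefficients are built from one common function $\phi(u)$. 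This collapse of the two \emph{a priori} independent generating functions onto a single $\phi$ is precisely the rigidity that distinguishes the multidimensional case from the one-dimensional one, and it is in controlling the mixed sector and ruling out any further freedom that the bulk of the careful bookkeeping lies.

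Finally, I would close the converse (``if'') direction by substituting the normal form \eqref{eq:op-1-normal} back into \eqref{Jacobi-generators} and checking the identity directly; once the coefficients carry the product structure $f=\alpha\phi\,\cS_1\phi$, $g=\beta\phi\,\cS_2\phi$ this is a short verification, and together with the constraints extracted above it yields the claimed equivalence.
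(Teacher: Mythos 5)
Your proposal follows essentially the same route as the paper's proof: reduce to the PVA--Jacobi identity for the single generator via the master formula (skew-adjointness already giving skewsymmetry), use extremal monomials to restrict the variable dependence to $f(u,u_{10})$ and $g(u,u_{01})$, extract the factorisations $f=\varphi\,\cS_1\varphi$ and $g=\psi\,\cS_2\psi$ from the coefficients of $\lambda_1\mu_1^{-1}$ and $\lambda_2\mu_2^{-1}$, and finally let the mixed monomials force $\psi=c\,\varphi$, which is exactly the paper's Lemma~\ref{lemma3-1} followed by equations \eqref{eq:op-4}--\eqref{eq:op1-ansg} and the concluding substitution. The only minor inaccuracy is attributing the reduction $f=f(u,u_{E_1})$, $g=g(u,u_{E_2})$ entirely to the ``pure'' sectors: ruling out dependence on $u_{01}$ and $u_{11}$ for $f$ (and symmetrically for $g$) in fact requires mixed coefficients such as that of $\lambda_1^{-2}\lambda_2^{-1}\mu_2$, but since your plan is to impose the vanishing of \emph{all} coefficients this does not affect the validity of the argument.
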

\begin{proof}
For compactness of notation, let $f_1=f$ and $f_2=g$; then the $\lambda$ bracket associated to \eqref{eq:op-1} is $\{u_{\lambda}u\}= \sum_{k=1}^2\left(\lambda_kf_k-\lambda_k^{-1}\cS^{-1}_kf_k\right)$. We write the Jacobi identity \eqref{Jacobi-identity} as the vanishing of the \emph{Jacobiator} $J(\lambda,\mu):=\{u_{\lambda}\{u_{\mu}u\}\}-\{u_{\mu}\{u_{\lambda}u\}\}-\{\{u_\lambda u\}_{\lambda \mu}u\}$. We compute $J(\lambda,\mu)$ using the master formula; for example, the first term of $J$ is
\begin{equation*}
\begin{split}
        \{u_{\lambda}\{u_{\mu}u\}\}&=\sum_{\substack{m,n\in \mathbb{Z}\\k,j=1,2}}\lambda_1^m\lambda_2^n\lambda_k(\cS_1^m\cS_2^nf_k)\frac{\partial}{\partial u_{mn}}\mu_jf_j\\
        &-\sum_{\substack{m,n\in \mathbb{Z}\\k,j=1,2}}\lambda_1^m\lambda_2^n\lambda_k^{-1}(\cS_1^m\cS_2^n\cS_k^{-1}f_k)\frac{\partial}{\partial u_{mn}}\mu_jf_j\\
        &-\sum_{\substack{m,n\in \mathbb{Z}\\k,j=1,2}}\lambda_1^m\lambda_2^n\lambda_j^{-1}\lambda_k\cS_j^{-1}(\cS_1^m\cS_2^nf_k)\frac{\partial}{\partial u_{mn}}\mu_j^{-1}f_j\\&+\sum_{\substack{m,n\in \mathbb{Z}\\k,j=1,2}}\lambda_1^m\lambda_2^n\lambda_j^{-1}\lambda_k^{-1}\cS_j^{-1}(\cS_1^m\cS_2^n\cS_k^{-1}f_k)\frac{\partial}{\partial u_{mn}}\mu_j^{-1}f_j.
\end{split}
\end{equation*}
The fulfillment of the Jacobi identity is then equivalent to the vanishing of all the coefficients of $(\lambda_1,\mu_1,\lambda_2,\mu_2)$ in $J$. Preliminarily, we have the following lemma. 
\begin{lemma}\label{lemma3-1}
    The Jacobi identity $J(\lambda,\mu)=0$ for a skewsymmetric bracket \eqref{eq:op-1} implies that $f=f(u,u_{10})$, $g=g(u,u_{01})$ .
\end{lemma}
\begin{proof}
    Assume $f_k\neq 0$ (if $f=f_1=0$ or $g=f_2=0$, we are in the same case as as one-dimensional scalar operators, studied in \cite{DSKVW18-1}). First of all, 
    let $m_k=max\{m|\frac{\partial f_k}{\partial u_{mn}}\neq 0\}$ and  suppose that $m_k\geq 2$.
     Computing the coefficient of $\lambda_1^{m_k+1}\lambda_2^n\mu_k$ in $(J=0)$, we obtain:$(\cS_1^{m_k}\cS_2^nf_k)\frac{\partial f_k}{\partial u_{m_kn}}-0=0$,
     hence $\frac{\partial f_k}{\partial u_{m_kn}}=0$, a contradiction. Hence $\frac{\partial f_k}{\partial u_{mn}}=0$ for $m>1$. In a similar way we prove that 
     $\frac{\partial(\cS_k^{-1} f_{k})}{\partial u_{mn}}=0$ for $m<-1$, namely $\frac{\dev f_k}{\dev u_{mn}}=0$ for $m<0$. In the same way we have $\frac{\partial f_k}{\partial u_{mn}}=0$ for $n>1$  and $n<0$. Therefore, a necessary condition for the vanishing of the Jacobiator is $f_k=f_k(u,u_{01},u_{10},u_{11})$. Computing the coefficient of $\lambda_1^{-2}\lambda_2^{-1}\mu_2$, we have 
     $(\cS_1^{-2}\cS_2^{-1}f_1)\frac{\partial g}{\partial u_{11}}=0$, so $g$ is not a functions in $u_{11}$; similarly for $f$. Finally, in a similar way we see that $f$ is not a functions in $u_{01}$ and
     $g$ is not a functions in $u_{10}$. This proves the lemma.
\end{proof}

With the results of Lemma \ref{lemma3-1}, the $\lambda$-bracket $\{u_\lambda u\}$ of (-1,1) order has the form
\begin{equation}\label{eq:op-3}
    \{u_\lambda u\}=(\lambda_1-(\lambda_1\cS_1)^{-1})\circ f(u,u_{10})+(\lambda_2-(\lambda_2\cS_2)^{-1})\circ g(u,u_{01}),
\end{equation}
Requiring the vanishing of the coefficients of $\lambda_1\mu_1^{-1}$ and $\lambda_2\mu_2^{-1}$ in the Jacobi identity $J=0$, we obtain the following equations on $f(u,u_{10})$ and $g(u,u_{01})$:
\begin{equation}\label{eq:op-4}
    -f\left(\cS_1^{-1}\frac{\partial f}{\partial u_{10}}\right)+(\cS_1^{-1} f)\frac{\partial f}{\partial u}=0,
\end{equation}
\begin{equation}\label{eq:op-5}
    -g\left(\cS_2^{-1}\frac{\partial g}{\partial u_{01}}\right)+(\cS_2^{-1} g)\frac{\partial g}{\partial u}=0.
\end{equation}
First, by \eqref{eq:op-4},
\begin{equation}\label{eq:op-6}
    \frac{\partial_{u} f}{f}=\cS_1^{-1}\left( \frac{\partial_{u_{10}} f}{f}\right)
\end{equation}
Since the LHS (resp. RHS) of this equation is a function of $u, u_{10}$ (resp. of $u_{-10}, u$), we conclude that both sides are functions of
$u$ only. We have 
\begin{align*}
    \frac{\partial_{u} f}{f}&=k(u),\\
    \frac{\partial_{u_{10}} f}{f}&=k(u_{10}).
\end{align*}
Hence
\begin{equation*}
    f=\varphi _1(u)\varphi _2(u_{10})
\end{equation*}
It follows that $k(u)=\varphi _1^{'}(u)/\varphi_1(u)$ and $\cS_1^{-1}k(u_{10})=\cS_1^{-1}(\varphi _2^{'}(u_{10})/\varphi_2(u_{10}))=\varphi _2^{'}(u)/\varphi_2(u)$.
Thus
\begin{align*}
    \frac{\varphi _1^{'}(u)}{\varphi_1(u)}&=\frac{\varphi _2^{'}(u)}{\varphi_2(u)}\\
\Rightarrow \frac{d}{du}\log(\varphi _1)&=\frac{d}{du}\log(\varphi _2)\\
\Rightarrow \varphi _2&=c\varphi _1
\end{align*}
where $c$ is a constant.
We have:
\begin{equation}\label{eq:op1-ansf}
    f=\varphi(u)\varphi(u_{10}).
\end{equation}
Note that $\varphi(u_{10})=\cS_1\varphi(u)$, so we can use the shorthand notation $f=\varphi\varphi_{10}$. By \eqref{eq:op-5}, in the same way, we have 
\begin{equation}\label{eq:op1-ansg}
    g=\psi\psi_{01}.
\end{equation}
Replacing the ansatz  \eqref{eq:op1-ansf} and \eqref{eq:op1-ansg} in the Jacobi identity we find that the 24 nonvanishing terms (which appear as coefficients of $\lambda_1^a\lambda_2^b\mu_1^c\mu_2^d$ with $(a,b,c,d)$ taking some values $(-1,0,1)$) are all of the form $\varphi_{mn}\psi_{pq}\left(\varphi\psi'-\psi\varphi'\right)_{rs}$. A necessary and sufficient condition on $\varphi$ and $\psi$ is therefore
$\psi=c\, \varphi$ for some arbitrary constant $c$. The form \eqref{eq:op-1-normal} is obtained by rescaling the function $\varphi(u)=:\alpha f(u)$, $c\,\alpha=:\beta$ to obtain a more symmetric result; however, note that the operator depends on one arbitrary function and \emph{one} constant.
\end{proof}
\begin{remark}
Note that, for $f$ a positive (respectively negative) function, there always exists a change of variables taking $P$ to the constant form $P'=\alpha \cS_1+\beta \cS_2-\beta \cS_2^{-1}-\alpha \cS_1^{-1}$. Indeed, let $v(u)=\int^u\frac{1}{f(s)}\ud s$. Then, by the master formula we have
\begin{equation}\label{eq:1stOrder-const}
\begin{split}
P'=&\{v_{\mlambda} v\}|_{\mlambda\to\boldsymbol{\cS}}=v'\,\{u_{\mlambda\boldsymbol{\cS}}u\}\,v'|_{\mlambda\to\boldsymbol{\cS}}\\
&=\frac{1}{f(u)}f(u)\left.\left(\alpha\cS_1 \left(f(u)\frac{1}{f(u)}\right)\lambda_1+\beta\cS_2 \left(f(u)\frac{1}{f(u)}\right)\lambda_2-\cdots\right)\right|_{\mlambda\to\boldsymbol{\cS}}\\
&=\left(\alpha\cS_1+\beta\cS_2-\beta\cS_2^{-1}-\alpha\cS_1^{-1}\right).
\end{split}
\end{equation}
\end{remark}
\begin{remark}
One might be tempted to consider a constant two-dimensional operator of the form \eqref{eq:1stOrder-const} as an essentially one-dimensional operator on a deformed lattice, similarly to the differential case \cite{CCS17}. However, observe that a linear combination of shift operators is \emph{not} a shift operator, so the analogy does not work. Indeed, a linear combination of shift operators is not an algebra endomorphism: for $\widetilde{S}:=(\alpha S_1+\beta S_2)$ we have $\widetilde{S}(fg)=\alpha f_1g_1+\beta f_2g_2$, while $(\widetilde{\cS}f)(\widetilde{\cS}g)=\alpha^2f_1g_1+\beta^2f_2g_2+\alpha\beta(f_1g_2+f_2g_1)$.
\end{remark}

\subsection{$(-2,2)$-order}
We consider skew-adjoint operators of $(-2,2)$ order, of the generic form
\begin{equation}\label{eq:generic}
    P(\cS)=\sum_{i,j=1}^D\left(F_{ij}(u,\ldots)\cS_i \cS_j-\cS_i^{-1}S_j^{-1}\circ F_{ij}(u,\ldots)\right)+\sum_{i=1}^D\left(G_i(u,\ldots)\cS_i-\cS_i^{-1}\circ G_i(u,\ldots)\right)
\end{equation}
\begin{proposition}\label{prop:2D}
For $D=2$, all the scalar Hamiltonian operators of order $(-2,2)$ are either of the form 
\begin{equation}\label{eq:2ndOrderNormal}
P=f(u)\left(\alpha \cS_1^2+\beta \cS_1\cS_2+\gamma \cS_2^2+\delta S_1 +\epsilon \cS_2-(\cS_i\leftrightarrow \cS_i^{-1})\right)\circ f(u)
\end{equation}
for arbitrary values of the constants (with $\alpha,\beta,\gamma$ not all vanishing) or
\begin{align}\label{eq:2ndOrder1D}
P_i&=f\cS_i\circ F^\alpha\cS_i\circ f+f\left(\left(F^\alpha+\beta\right)\cS_i+\cS_i\circ\left(F^\alpha+\beta\right)\right)\circ f\\ \notag
&-f\left(\left(F^\alpha+\beta\right)\cS^{-1}_i+\cS^{-1}_i\circ\left(F^\alpha+\beta\right)\right)\circ f-f\cS^{-1}_i\circ F^\alpha\cS^{-1}_i\circ f
\end{align}
for $i=1,2$ (no summation), $\alpha$ and $\beta$ arbitrary constants ($\alpha\neq0$, or else this is a subcase of \eqref{eq:2ndOrderNormal}), $f=f(u)$, and $F(u)$ a non-constant solution of $fF'=F$.
\end{proposition}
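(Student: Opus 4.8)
The plan is to repeat, one order higher, the argument that proved Proposition~\ref{thm:ham1-2d}. The skew-adjoint operator \eqref{eq:generic} has, for $D=2$, the $\lambda$-bracket
\[
\{u_{\lambda}u\}=\sum_{i,j=1}^{2}\big(\lambda_i\lambda_j-(\lambda_i\cS_i\lambda_j\cS_j)^{-1}\big)\circ F_{ij}+\sum_{i=1}^{2}\big(\lambda_i-(\lambda_i\cS_i)^{-1}\big)\circ G_i,
\]
with the shift operators inside the inverse monomials acting on the coefficients. I would encode the Jacobi identity \eqref{Jacobi-identity} as the vanishing of the Jacobiator $J(\lambda,\mu)=\{u_{\lambda}\{u_{\mu}u\}\}-\{u_{\mu}\{u_{\lambda}u\}\}-\{\{u_{\lambda}u\}_{\lambda\mu}u\}$, compute it through the master formula \eqref{eq:lambda-bracket-P}, and demand that every coefficient of $\lambda_1^{a}\lambda_2^{b}\mu_1^{c}\mu_2^{d}$ vanish. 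Since \eqref{eq:generic} depends on five effective coefficient functions, the resulting system is much larger than in the $(-1,1)$ case, but it has the same layered structure: the extremal monomials in $J$ only involve the top coefficients $F_{ij}$, the next layer couples the $F_{ij}$ to the $G_i$, and so forth.

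The first step is the analogue of Lemma~\ref{lemma3-1}: bounding the variables on which each coefficient may depend. Inspecting the highest powers of $\lambda_1$ (then of $\lambda_2$, and the inverse powers produced by the adjoint terms) in $J=0$, the extremal coefficient reduces to a single term proportional to some $\partial F_{ij}/\partial u_{mn}$, which must therefore vanish outside a small index box; iterating over the four directions confines each $F_{ij}$ and each $G_i$ to a finite set of variables $u_{mn}$. The second step solves the functional equations for the surviving leading coefficients. The relevant equations, read off from the top mixed monomials $\lambda_i^{2}\mu_i^{-2}$ and $\lambda_i\mu_i^{-1}$, generalize \eqref{eq:op-4}--\eqref{eq:op-6}; solving them by the same logarithmic-derivative and telescoping argument that produced \eqref{eq:op1-ansf}--\eqref{eq:op1-ansg} forces each pure second-order coefficient $F_{ii}$ to factorize as a product of shifts of a single function $\varphi_i(u)$, together with, \emph{when both pure coefficients are present}, a cross-compatibility that pins down $\varphi_2=c\,\varphi_1$ up to a constant, exactly as $\psi=c\varphi$ arose before.

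The genuinely new phenomenon, and the main obstacle, is the case split recorded in the two normal forms. At second order the three sectors $F_{11}$, $F_{12}+F_{21}$, $F_{22}$ are coupled by many simultaneous coefficient equations, whereas in the $(-1,1)$ case the two directions decoupled after a single cross-equation. I expect two branches. If the operator is \emph{essentially two-dimensional} --- shifts genuinely present in both directions at second order, so that the mixed coefficient or two distinct pure coefficients survive --- the cross constraints should over-determine the problem and force the leading coefficients to be exactly $f(u)$ times a shift of $f(u)$ with constant coefficients, the subleading $G_i$ being then determined; after the rescaling that produced \eqref{eq:op-1-normal} this yields the rigid form \eqref{eq:2ndOrderNormal}. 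If instead the second-order part lives in a single direction $\cS_i$, the cross constraints are vacuous and the remaining Jacobi equations reduce precisely to the one-dimensional scalar classification of \cite{10.1093/imrn/rny242}, whose second-order solution is \eqref{eq:2ndOrder1D} with $F$ a non-constant solution of $fF'=F$; the extra functional factor $F^{\alpha}$ absent from \eqref{eq:2ndOrderNormal} is exactly the freedom that only persists in one dimension. The hard part is thus the bookkeeping of this last step: organizing the large number of coefficient equations (the analogue of the ``24 nonvanishing terms'' of the previous proof, now far more numerous and involving the mixed shift $\cS_1\cS_2$), proving that a nontrivial mixed term is incompatible with any non-constant $F$-factor and so cleanly separating the two branches, and finally checking sufficiency by substituting both candidate forms back into $J$.
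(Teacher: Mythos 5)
Your plan follows the paper's proof essentially step by step: a variable-dependency lemma (the paper's Lemma~\ref{lemma3-2}, the analogue of Lemma~\ref{lemma3-1}), the logarithmic-derivative factorization of the leading coefficients, the cross-compatibility equations pinning the various $\varphi_i$ to a single $f(u)$ up to constants, and a case analysis on which $F_{ij}$ vanish that separates the rigid form \eqref{eq:2ndOrderNormal} from the essentially one-dimensional form \eqref{eq:2ndOrder1D}. So the approach is the right one and matches the paper's.

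One concrete correction to your branch dichotomy, though. You state that the one-dimensional branch (where the extra factor $F^{\alpha}$ survives) occurs whenever ``the second-order part lives in a single direction $\cS_i$'', the cross constraints then being ``vacuous''. That criterion is not sufficient: in the paper's Case~3 ($F_{12}=F_{22}=0$, so the second-order part is purely along $\cS_1$), the coefficient of $\lambda_1^2\mu_1\mu_2$ in the Jacobiator still couples $G_2$ to the $\cS_1$-sector, and the analysis splits on whether $G_2$ vanishes. If $G_2\neq 0$ --- i.e.\ a \emph{first-order} shift in the orthogonal direction survives --- the operator is forced back into the rigid form \eqref{eq:2ndOrderNormal} with $\beta=\gamma=0$, and the $F^{\alpha}$ freedom is excluded. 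The one-dimensional normal form \eqref{eq:2ndOrder1D} only arises when \emph{all} shifts, at both orders, lie along a single direction. Under your stated criterion you would admit operators of the form $f\cS_1\circ F^{\alpha}\cS_1\circ f+\dots+\epsilon f(\cS_2-\cS_2^{-1})\circ f$ with non-constant $F$, which are not Hamiltonian; so the branch separation must be keyed to the full set of surviving coefficients $\{F_{ij},G_i\}$, not to the top-order block alone.
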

As for the first-order case, the first step of the proof is identifying the variable dependency of the coefficients $F_{ij}$ and $G_i$. As before, we compute the $J(\lambda,\mu)$ for the $\lambda$ bracket associated to \eqref{eq:generic}, i.e. $\{u_\lambda u\}:=P|_{\cS_{1,2}\to\lambda_{1,2}}$, and collect the coefficients of the variables $\lambda,\mu$. Our preliminary result is the following

\begin{lemma}\label{lemma3-2}
    The Jacobi identity for a skew-adjoint operator of the form \eqref{eq:generic}  imply that $F_{11}=F_{11}(u,u_{10},u_{20})$, $F_{12}=F_{12}(u,u_{10},u_{01},u_{11})$, $F_{22}=F_{22}(u,u_{01},u_{02})$, $G_1=G_1(u,u_{10})$,  $G_2=G_2(u,u_{01})$.
\end{lemma}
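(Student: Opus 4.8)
The plan is to mimic the strategy used successfully in Lemma~\ref{lemma3-1}: write the $\lambda$-bracket associated to \eqref{eq:generic} as $\{u_{\boldsymbol{\lambda}}u\}=\sum_{i,j}F_{ij}\lambda_i\lambda_j-\sum_{i,j}(\cS_i\cS_j)^{-1}(F_{ij})\lambda_i^{-1}\lambda_j^{-1}+\sum_i G_i\lambda_i-\sum_i(\cS_i^{-1}G_i)\lambda_i^{-1}$, expand the Jacobiator $J(\boldsymbol{\lambda},\boldsymbol{\mu})$ via the master formula~\eqref{eq:lambda-bracket-P}, and extract the requirement that the extremal coefficients in the monomials $\lambda_1^a\lambda_2^b\mu_1^c\mu_2^d$ vanish. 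Because the bracket has order $(-2,2)$, the highest powers of each $\lambda_k$ produced by one application of the master formula are controlled by $m_k:=\max\{m\mid\partial F_{ij}/\partial u_{mn}\neq0\}$ (and the analogous maxima in the other index direction and for the $G_i$). I would isolate, one coefficient at a time, the leading monomials that force these maxima to collapse.

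Concretely, the first step is to bound the maximal shift that each coefficient can depend on. For $F_{11}$, computing the coefficient of $\lambda_1^{m_1+2}\lambda_2^{n}\mu_1$ (the analogue of the step in Lemma~\ref{lemma3-1} but with the extra $\cS_1^2$ pushing the top power to $m_1+2$) should leave a single surviving term proportional to $(\cS_1^{m_1}\cS_2^n F_{11})\,\partial F_{11}/\partial u_{m_1 n}$, forcing $\partial F_{11}/\partial u_{m_1 n}=0$ once $m_1\ge 3$, and symmetrically bounding the negative shifts through the $(\cS_1\cS_2)^{-1}$ tail. Iterating this in each lattice direction for each of $F_{11},F_{12},F_{22},G_1,G_2$ gives that every coefficient is a function only of $u$ and of shifts lying in a small box; then I would kill the \emph{off-box} variables by examining mixed monomials, exactly as the proof of Lemma~\ref{lemma3-1} removed the $u_{11}$-dependence of $f$ and $g$. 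The expected outcome is that $F_{11}$ can only involve $u,u_{10},u_{20}$ (the shifts compatible with the $\cS_1^2$ term), $F_{22}$ only $u,u_{01},u_{02}$, the mixed term $F_{12}$ only $u,u_{10},u_{01},u_{11}$, and the first-order pieces $G_1,G_2$ only their respective single-direction shifts, as claimed.

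The main obstacle, compared with the first-order lemma, is bookkeeping: the Jacobiator now has many more nonvanishing monomials because the $\cS_i\cS_j$ terms mix the two directions, and the cross term $F_{12}$ couples variables in both the $E_1$ and $E_2$ directions simultaneously. The delicate point is to choose, for each coefficient and each direction, a monomial $\lambda_1^a\lambda_2^b\mu_1^c\mu_2^d$ whose coefficient in $J$ receives a contribution from \emph{only one} term of the master-formula expansion, so that the vanishing condition reads as a single product rather than a linear combination that might allow cancellations. I would argue that for the genuinely extremal powers (e.g. $a=m_1+2$ with $c=1$, or the symmetric negative-index choices) no other term of $J$ can reach that high a power, which guarantees isolation; this is where a careful tracking of the maximal and minimal degrees contributed by the three summands of $J(\boldsymbol{\lambda},\boldsymbol{\mu})$ is essential.

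Finally I would assemble these partial results: having shown each coefficient depends only on the stated narrow set of variables, the lemma is proved. I would relegate the full enumeration of monomials and the routine verification that each isolated coefficient indeed vanishes to an explicit (possibly computer-assisted) computation, noting — as the paper already does for the first-order case — that the argument is structurally identical in each direction and for each coefficient, so it suffices to present the argument for $F_{11}$ and $F_{12}$ in detail and indicate that $F_{22},G_1,G_2$ follow by the same reasoning together with the $\cS_1\leftrightarrow\cS_2$ symmetry of \eqref{eq:generic}.
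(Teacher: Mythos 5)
Your proposal follows essentially the same route as the paper: first bound the range of shifts each coefficient can depend on by isolating extremal monomials of the Jacobiator (the paper takes the coefficient of $\lambda_1^{m_{ij}+1}\lambda_2^n\mu_i\mu_j$ to force $\partial F_{ij}/\partial u_{mn}=0$ for $m>2$, argues symmetrically for the negative tail via the adjoint part and for the second lattice direction, and bounds the $G_i$ likewise), and then eliminate the remaining off-box variables such as $u_{22},u_{21},u_{12}$ by further, computer-assisted, coefficient extraction. One small correction to your bookkeeping: to extract $\partial F_{11}/\partial u_{m_1 n}$ the $\mu$-part of the chosen monomial must be $\mu_1^2$ rather than $\mu_1$, since the $\mu$-degree selects which coefficient of the inner bracket gets differentiated; with that adjustment your isolation argument matches the paper's.
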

\begin{proof}
    We assume that $F_{ij}\neq0$ at least for some $(i,j)=1,2$ (if all $F_{ij}=0$, the bracket is not of order $(-2,2)$). Let $m_{ij}=max\{m|\frac{\partial F_{ij}}{\partial u_{mn}}\neq0\}$, and suppose that $m_{ij}\geq3$. Calculating the coefficient of $\lambda^{m_{ij}+1}_1\lambda_2^n\mu_i\mu_j$ in $J(\lambda,\mu)$ we obtain $\frac{\partial F_{ij}}{\partial u_{m_{ij} n}}=0$. Hence $\frac{\partial F_{ij}}{\partial u_{mn}}=0$ for $m>2$. Similarly, we prove $\frac{\partial\left(\cS_i^{-1}\cS_j^{-1} F_{ij}\right)}{\partial u_{mn}}=0$ for $m<-2$, which is equivalent to $\frac{\partial F_{ij}}{\partial u_{mn}}=0$ for $m<0$.  In the same way, we have $\frac{\partial F_{ij}}{\partial u_{mn}}=0$ for $n>2$ and $n<0$. We also have $\frac{\partial G_{i}}{\partial u_{mn}}=0$ for $m>1$ or $n>1$. We can first conclude that necessary conditions for the Jacobi identity are $F_{ij}=F_{ij}(u,u_{01},u_{10},u_{11},u_{02},u_{20},u_{21},u_{12},u_{22})$ and $G_i=G_i(u,u_{01},u_{10},u_{11})$.  To identify a stricter dependency on the variables, we perform direct computations with help of a computer algebra system. We use it to identify coefficients of the Jacobi identity whose vanishing implies the non-dependency of $F_{ij}$ or $G_i$ by some of the variables. For example, the coefficient of $\lambda_1^{4}\lambda_2^2\mu_1^2$ in $J(\lambda,\mu)$ is $(\cS_1^2\cS_2^2F_{11})\frac{\dev F_{11}}{\dev u_{22}}$, implying $\frac{\dev F_{11}}{\dev u_{22}}=0$. Continuing in this way, we conclude that $F_{11}$ is not a function of $u_{01},u_{11} ,u_{02},u_{12},u_{21}$; $F_{12}$ is not a function of $u_{20} ,u_{02},u_{12},u_{21}$; $F_{22}$ is not a function of $u_{10},u_{11} ,u_{20},u_{12},u_{21}$;$ G_{1}$ is not a function of $u_{01},u_{11}$, and $G_{2}$ is not a function of $u_{10},u_{11}$. This proves the Lemma.
\end{proof}
\begin{proof}[Proposition \ref{prop:2D}]

Let $F_{ij}$ and $G_i$ have the variable dependency prescribed by Lemma \ref{lemma3-2}; to proceed further we must consider the different cases for the vanishing of some of the leading terms $F_{ij}$
\begin{description}
    \item[Generic case: $F_{ij}\neq0$ for all $(i,j)$]~\\
    From the coefficient of $\lambda_1^2\mu_1\mu_2^2$ we have $\cS_1F_{22}\frac{\partial F_{11}}{\partial u_{10}}=0$, so $F_{11}$ is not a function of $u_{10}$. In same way, we get $F_{12}=F_{12}(u,u_{11})$, $F_{22}=F_{22}(u,u_{02})$, $G_1=G_1(u,u_{10})$, $G_2=G_2(u,u_{01})$.
With these assumptions, many of the coefficients in $J(\lambda,\mu)$ take an easy form. From the coefficients of $\lambda_1^4\mu_1^2$, $\lambda_1^2\lambda_2^2\mu_1\mu_2$, $\lambda_2^4\mu_2^2$, $\lambda_1^3\lambda_2\mu_1^2$ , $\lambda_1^3\lambda_2\mu_1\mu_2$, $\lambda_1^2\lambda_2^2\mu^2_1$, $\lambda_1^2\lambda_2^2\mu_2^2$, respectively, we have:
\begin{equation}\label{eq:op-9}
    (\cS_1^2 F_{11})\frac{\partial F_{11}}{\partial u_{20}}-F_{11}\cS_1^2\left(\frac{\partial F_{11}}{\partial u_{}}\right)=0,
\end{equation}
\begin{equation}\label{eq:op-10}
    (\cS_1\cS_2 F_{12})\frac{\partial F_{12}}{\partial u_{11}}-F_{12}\cS_1\cS_2\left(\frac{\partial F_{12}}{\partial u_{}}\right)=0,
\end{equation}
\begin{equation}\label{eq:op-11}
    (\cS_2^2 F_{22})\frac{\partial F_{22}}{\partial u_{02}}-F_{22}\cS_2^2\left(\frac{\partial F_{22}}{\partial u_{}}\right)=0,
\end{equation}
\begin{equation}\label{eq:op-12}
    (\cS_1^2 F_{12})\frac{\partial F_{11}}{\partial u_{20}}-F_{11}\cS_1^2\left(\frac{\partial F_{12}}{\partial u_{}}\right)=0,
\end{equation}
\begin{equation}\label{eq:op-13}
    (\cS_1\cS_2 F_{11})\frac{\partial F_{12}}{\partial u_{11}}-F_{12}\cS_1\cS_2\left(\frac{\partial F_{11}}{\partial u_{}}\right)=0,
\end{equation}
\begin{equation}\label{eq:op-14}
    (\cS_1^2 F_{22})\frac{\partial F_{11}}{\partial u_{20}}-F_{11}\cS_1^2\left(\frac{\partial F_{22}}{\partial u_{}}\right)=0,
\end{equation}
\begin{equation}\label{eq:op-15}
    (\cS_2^2 F_{11})\frac{\partial F_{22}}{\partial u_{02}}-F_{22}\cS_2^2\left(\frac{\partial F_{11}}{\partial u_{}}\right)=0.
\end{equation}
By Equation \eqref{eq:op-9}, we get
\begin{equation}\label{eq:op-16}
    \frac{\partial_{u_{20}} F_{11}}{F_{11}}=\cS_1^{2}\left( \frac{\partial_{u_{}} F_{11}}{F_{11}}\right).
\end{equation}
Since the LHS (resp. RHS) of this equation is a function of $u, u_{20}$ (resp. of $u_{20}, u_{40}$), we conclude that both sides are functions of
$u_{20}$ only. We have 
\begin{align*}
    \frac{\partial_{u} F_{11}}{F_{11}}&=k_{1}(u),\\
    \frac{\partial_{u_{20}} F_{11}}{F_{11}}&=k_{1}(u_{20}).
\end{align*}
Hence
\begin{equation*}
    F_{11}=f_1(u)f_2(u_{20})
\end{equation*}
Moreover, $k_{1}(u)=f_1^{'}(u)/f_1(u)=f'_2(u)/f_2(u)$. Analogously to the previous section, this latter relation can be integrated to obtain $f_2=\alpha f_1$, where $\alpha$ is a constant. Then, denoting $f_1(u)=f(u)$, we have
\begin{equation}\label{eq:op-17}
    F_{11}=\alpha\, f(u)f(u_{20}).
\end{equation}
By \eqref{eq:op-10} and \eqref{eq:op-11}, in a similar way, we obtain
\begin{equation}\label{eq:op-18}
    F_{12}=h(u)h(u_{11}),\qquad F_{22}=g(u)g(u_{02}).
\end{equation}
Next, we replace \eqref{eq:op-17} and \eqref{eq:op-18} in Equation \eqref{eq:op-12} and \eqref{eq:op-13}:
\begin{equation}
    \frac{f'(u_{20})}{f(u_{20})}=\cS_1^2\left(\frac{h'(u)}{h(u)}\right), \quad \frac{h'(u_{11})}{h_{11}}=\cS_1\cS_2\left(\frac{f'(u)}{f(u)}\right)
\end{equation}
This gives us $F_{12}=\beta\, f(u)f(u_{11})$; doing the same with \eqref{eq:op-14} and \eqref{eq:op-15} we also conclude $F_{22}=\gamma\, f(u)f(u_{02})$.

Again substituting the expressions for $F_{11}$, $F_{12}$, and $F_{22}$ in the Jacobi identity, the vanishing of the coefficients of $\lambda_1^3\mu^2_1$, $\lambda_1^3\mu^1,\lambda_1^2\lambda_2\mu_1^2$, and $\lambda_1^2\lambda_2\mu_2$ is the following set of equations:
\begin{align}\label{eq:op-19}
    f(u)(\cS_1^2 G_1)f'(u_{20})-f(u)f(u_{20})\cS_1^2\left(\frac{\partial G_1}{\partial u}\right)&=0,\\
    G_1\cS_1(f(u_{20})f'(u))-\frac{\partial G_1}{\partial u_{10}}\cS_1(f(u)f(u_{20}))&=0,\\
    f(u)f'(u_{20})\cS_1^2 G_2-f(u)f(u_{20})\cS_1^2\left(\frac{\partial G_2}{\partial u}\right)&=0,\\
    -G_2\cS_2(f(u_{20})f'(u))+\frac{\partial G_2}{\partial u_{01}}\cS_2 (f(u)f(u_{20}))&=0.
\end{align}
We proceed similarly as we did to solve \eqref{eq:op-12}-\eqref{eq:op-15} and we obtain $G_1=\delta\, f(u)f(u_{10})$. $G_2=\epsilon\, f(u)f(u_{01})$, with $\delta$ and $\epsilon$ arbitrary constants; it is now immediate to write \eqref{eq:2ndOrderNormal} by observing that, e.g., $f(u_{10})=\cS_1 f(u)$.
    \item[Case 2: $F_{22}=0$]
The $\lambda$ bracket associated to the difference operator is, in this case, 
\begin{equation}\label{eq:op-20}
   \begin{split}
        \{u_\lambda u\}=&(\lambda_1^2-(\lambda_1\cS_1)^{-2}) F_{11}+(\lambda_1\lambda_2-(\lambda_1\lambda_2\cS_1\cS_2)^{-1}) F_{12}\\
        &+(\lambda_1-(\lambda_1\cS_1)^{-1}) G_1+(\lambda_2-(\lambda_2\cS_2)^{-1}) G_2.
   \end{split}
\end{equation}
We proceed similarly to the previous case, finding necessary conditions from the vanishing of different coefficients of the monomials in $(\lambda_1,\lambda_2,\mu_1,\mu_2)$. For instance, we do not have the coefficient $\lambda_1^2\mu_1\mu_2^2$, so we cannot assume from the beginning that $F_{11}=F_{11}(u,u_{20})$. However, the coefficient of $\lambda_1^4\mu_1^2$ is of the same form as in the generic case, requiring $F_{11}=f(u)g(u_{10})f(u_{20})$ and $F_{12}=f(u)h_1(u_{10})h_2(u_{01})f(u_{11})$. Then the coefficients of $\lambda_1^3\mu_1^2\mu_2^{-1}$ and $\lambda_1^3\mu_1$ require that $h_1(u_{10})$ and $h_2(u_{01})$ are constants, and so on until we get $F_{11}=\alpha \,f(u)f(u_{20})$, $F_{12}=\beta f(u)f(u_{11})$, $G_1=\delta f(u)f(u_{10})$, $G_2=\epsilon f(u)f(u_{01})$; therefore, this is simply a particular case of \eqref{eq:2ndOrderNormal}. The same type of result holds true if $F_{11}$ is the only vanishing $F_{ij}$ term.
\item[Case 3: $F_{12}=F_{22}=0$]
First, note that this case is equivalent, upon exchanging of $\cS_1$ and $\cS_2$, to the one when $F_{22}$ is the only \emph{nonvanishing} $F_{ij}$ term.

When looking for terms allowing us to identify the functional dependence of $F_{11}$, $G_1$, and $G_2$ we encounter two subcases: after observing that necessary conditions for the Jacobi identities are $F_{11}=F_{11}(u,u_{10},u_{20})$, $G_1=G_1(u,u_{10})$, and $G_2=G_2(u,u_{01})$, we notice that the coefficient of $\lambda_1^2\mu_1\mu_2$ is $-(\cS_1G_2)\dev_{u_{10}}F_{12}$; the solutions of the system is radically different depending on the vanishing of $G_2$.

If $G_2\neq0$, we are back to Case 2 with $\beta=0$, namely $F_{12}$ is vanishing. However, if $G_2=0$ then all the shifts are only along one direction, which means that the associated $\lambda$ bracket is of the form
\begin{equation}\label{eq:op-21}
    \{u_\lambda u\}=F_{11}\lambda_1^2+G_1\lambda_1-(\cS_1\lambda_1)^{-2}F_{11}-(\cS_1\lambda_1)^{-1}G_1,
\end{equation}
with $F_{11}=F_{11}(u,u_{10},u_{20})$ and $G_1=G_1(u,u_{10})$.

Computing the Jacobi identity of \eqref{eq:op-21}, we first obtain $F_{11}=f(u)F(u_{10})f(u_{20})$, from which the  vanishing of the coefficient of $\lambda_1^3\mu_1^2$ requires 
\begin{equation}\label{eq:op-22}
  \begin{split}
        &f(u)F(u_{10})f'(u_{20})\cS_1^2 G_1+f(u)f(u_{10})f(u_{20})f(u_{30})F(u_{20})F'(u_{10})\\&=f(u)f(u_{20})F(u_{10})\cS_1^2\left(\frac{\partial G_1}{\partial u}\right)\\
        \Rightarrow &f'(u_{20})\cS_1^2 G_1+\frac{f(u_{10})f(u_{20})f(u_{30})F(u_{20})F'(u_{10})}{F(u_{10})}=f(u_{20})\cS_1^2\partial_uG_1
  \end{split}
\end{equation}
Since the first term in the LHS and the RHS are dependent only on $u_{20}$ and $u_{30}$, we conclude that the second term in the LHS is independent of $u_{10}$. Hence,
\begin{equation}\label{eq:op-23}
  \begin{split}
       &\frac{f(u_{10})F'(u_{10})}{F(u_{10})}=a\quad\mathrm{const.}\\
       \Rightarrow&fF'=aF.
  \end{split}
\end{equation}
First, consider the case where $a\neq0$. Then we can see from \eqref{eq:op-22} that
\begin{equation}
f'(u_{20})\cS_1^2G_1+af(u_{20})f(u_{30})F(u_{20})=f(u_{20})\cS_1^2\partial_uG_1,
\end{equation}
or, equivalently,
\begin{equation}
a\,f(u)\,F(u)\,f(u_{10})=f(u)\partial_uG_1(u,u_{10})-f'(u)G_1(u,u_{10}).
\end{equation}
Dividing both sides by $f^2(u)$, we obtain:
\begin{equation}
   \begin{split}
        \frac{af(u_{10})F(u)}{f(u)}&=\frac{f(u)\partial_uG_1(u,u_{10})-G_1(u,u_{10})f'(u)}{f^2(u)}\\
        \stackrel{\text{\eqref{eq:op-23}}}{\Rightarrow} f(u_{10})F'(u)&=\frac{\partial}{\partial u}\left(\frac{G_1(u,u_{10})}{f(u)}\right).
   \end{split}
\end{equation}
Integrating by $u$ and multiplying by $F(u)$, we obtain:
\begin{equation}\label{eq:op-24}
    G_1(u,u_{10})=f(u)f(u_{10})F(u)+f(u)A(u_{10}),
\end{equation}
where $A$ is an arbitrary function of $u_{10}$. Next, we consider the coefficient of $\lambda_1^3\mu_1$ in the Jacobi identity, whose vanishing condition is, after the substitution of \eqref{eq:op-23} and \eqref{eq:op-24},
\begin{equation}
    \begin{split} &f(u_{10})F(u_{20})A'(u_{10})=A(u_{10})F(u_{20})f'(u_{10})+f(u_{10})f(u_{20})F(u_{10})F'(u_{20}),\\
   \Rightarrow  &f(u_{10})A'(u_{10})-A(u_{10})f'(u_{10})=\frac{f(u_{10})f(u_{20})F(u_{10})F'(u_{20})}{F(u_{20})}  \\
   \Rightarrow&f(u_{10})A'(u_{10})-A(u_{10})f'(u_{10})=af(u_{10})F(u_{10})\\
   \Rightarrow&\frac{\partial}{\partial u}\left(\frac{A(u)}{f(u)}\right)=a\left(\frac{F(u)}{f(u)}\right).
    \end{split}
\end{equation}
Integrating by $u$, we obtain
\begin{equation}
    A(u_{10})=F(u_{10})f(u_{10})+cf(u_{10}),
\end{equation}
i.~e.~
\begin{equation}\label{eq:op-25}
    G_1(u,u_{10})=f(u)f(u_{10})(F(u_{})+F(u_{10})+c).
\end{equation}
Recall that $F$ satisfies equation \eqref{eq:op-23} for a generic value of $a$; this means that, if we choose a particular value for the constant and solve for function $F$, any power of $F$ can be used to define a Hamiltonian operator:
\begin{equation}
 f\,\left(F^\alpha\right)'=\alpha\,F^{\alpha-1}fF'=\alpha F^{\alpha-1}\, a F\Rightarrow f(F^\alpha)'=a\alpha F^\alpha.
\end{equation}
This means that we can characterize $F$ as a solution of $fF'=F$ and write the lambda bracket \eqref{eq:op-21} as 
\begin{equation}
\begin{split}
    \{u_{\mlambda} u\}&=f\,(\cS_1 F^\alpha)\,(\cS_1^2f)\lambda_1^2+f(\cS_1f)\left(F^\alpha+(\cS_1 F^\alpha)+c\right)\lambda_1\\&-(\cS_1^{-1}f)\,f\left(F^\alpha+(\cS_1^{-1}F^\alpha)+ c\right)\lambda_1^{-1}-(\cS_1^{-2}f)(\cS_1^{-1}F^\alpha)\,f\lambda_1^{-2},
\end{split}
\end{equation}
with $\alpha$ and $c$ arbitrary constants, $f=f(u)$, and $F(u)$ a non-constant solution of $fF'=F$. A simple redefinition of $c=\beta$ allows us to write the operator associated to the bracket as in \eqref{eq:2ndOrder1D}.
\item[Case 4: $F_{11}=F_{22}=0$]
In this case, by computing the Jacobi identity, we first get $F_{12}=f(u)h(u_{10},u_{01})f(u_{11}),G_1=g(u)g(u_{10}),G_2=k(u)k(u_{01})$. The computations differ if $G_i$'s are nonvanishing or not, but the conclusion is always the same: $h(u_{10},u_{01})$ must be a constant and $f$, $g$ and $k$ must be proportional. Therefore, this is yet another operator of the form \eqref{eq:2ndOrderNormal} with $\alpha=\gamma=0$, and possibly vanishing of $\delta$ and $\epsilon$.
\end{description}
\end{proof}
\begin{remark}
The brackets \eqref{eq:2ndOrder1D} are essentially one-dimensional; the proof that they are all the possible solutions for (1-dimensional) second-order Hamiltonian difference operators can be found in \cite{DSKVW18-1} -- notice that the power $\alpha$ of the functions $F$ can either be arbitrarily taken, or set equal to 1 by requiring that $fF'=\frac1\alpha F$. On the other hand, in the language of \cite{DSKVW18-1}, the appearance of the constant $\beta$ is due to the fact that ``any multiplicative Poisson $\lambda$-bracket of order $\leq 2$ is either of general type or a linear combination of the $\lambda$-bracket of complementary type and the $\lambda$ bracket of general type of order 1'' (Theorem 2.5), so our result matches with De Sole et al.'s one.
\end{remark}
\section{Bi-Hamiltonian pairs}
It is well-known that, when there exist two different (but ``compatible'') Hamiltonian formulations for the same system of equations we say that the system is \emph{bi-Hamiltonian}. The bi-Hamiltonian formulation of a system is a key criterion to determine its integrability and obtain its symmetries \cite{magri, olver}.

We say that two Hamiltonian operators $P$ and $Q$ are compatible, and therefore form a bi-Hamiltonian pair, if and only if any linear combinations of the two $P_\eta=P+\eta Q$ is a Hamiltonian operator for any $\eta$.

In the one-dimensional case, De Sole, Kac, Valeri and Wakimoto identified several compatible Hamiltonian pairs \cite{DSKVW18-1}. Considering only $(-1,1)$ and $(-2,2)$-order operators, we have two possible cases:
\begin{enumerate}
\item $P$ and $Q$ are both of the same so-called \emph{general form}, namely $P$ and $Q$ are both of the form \begin{equation}
P_i=f(u)\left(\alpha_i \cS^2+\beta_i\cS-\beta_i\cS^{-1}-\alpha_i\cS^{-2}\right)\circ f(u)
\end{equation}
for a function $f(u)$ and arbitrary constants $(\alpha_i,\beta_i)$, $i=1,2$, $P=:P_1$, $Q=:P_2$.
\item $P$ is of the form \eqref{eq:2ndOrder1D} and $Q$ is of order $(-1,1)$ and of the form $Q=\kappa f(u)(\cS-\cS^{-1})\circ f(u)$ for the same function $f(u)$.
\end{enumerate}
It is a natural question to ask whether, in the multidimensional case, more rigid constraints appear or the picture is the same. To this aim, we can start from the result of Proposition \ref{prop:2D}, since we can observe that all the $(-1,1)$-order Hamiltonian operator are of the form \eqref{eq:2ndOrderNormal} for $\alpha=\beta=\gamma=0$ and therefore we do not need to consider them as a separate case. We explore the following three cases:
\begin{enumerate}
    \item $P$ and $Q$ both of normal form \eqref{eq:2ndOrderNormal};
    \item $P$ of normal form and $Q$ of one-dimensional form \eqref{eq:2ndOrder1D};
    \item $P$ and $Q$ both of one-dimensional form \eqref{eq:2ndOrder1D}, either along the same direction or in orthogonal directions.
\end{enumerate}
The following results are obtained by direct computation of the Jacobi identity for $P_\eta=P+\eta Q$, where $P$ and $Q$ are as above and $\eta$ is a formal parameter. Such lengthy but straightforward computations can be easily performed with a computer algebra system.
\begin{proposition}\label{prop:biHamNN}
   Two two-dimensional, scalar Hamiltonian operators of the form \eqref{eq:2ndOrderNormal} are compatible if and only if $f_2(u)=\kappa f_1(u)$, for a nonvanishing constant $\kappa$, where $f_1$ denotes the function $f$ in the definition of $P$ and $f_2$ the one in the definition of $Q$.
\end{proposition}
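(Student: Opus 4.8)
The plan is to exploit the bilinearity of the Jacobiator. Writing $\{u_{\mlambda}u\}_\eta=\{u_{\mlambda}u\}_P+\eta\,\{u_{\mlambda}u\}_Q$ for the $\lambda$-bracket of $P_\eta=P+\eta Q$, the Jacobiator $J_\eta(\lambda,\mu)$ is a bilinear (quadratic) expression in the bracket, so it expands as $J_\eta=J_P+\eta\,J_{P,Q}+\eta^2 J_Q$. Here $J_P$ and $J_Q$ are the Jacobiators of $P$ and $Q$, which vanish because both are Hamiltonian, while $J_{P,Q}$ is the polarized ``mixed Jacobiator'', bilinear in the two brackets. Hence $P_\eta$ is Hamiltonian for all $\eta$ if and only if $J_{P,Q}(\lambda,\mu)\equiv0$, and the whole problem reduces to the vanishing of the coefficients of $\lambda_1^a\lambda_2^b\mu_1^c\mu_2^d$ in this single bilinear object.

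For necessity I would compute $J_{P,Q}$ exactly as in the proof of Proposition~\ref{prop:2D}, writing the coefficients of $P_\eta$ as $F_{ij}^\eta=F_{ij}^P+\eta F_{ij}^Q$ with, for instance, $F_{11}^P=\alpha_1 f_1(u)f_1(u_{20})$ and $F_{11}^Q=\alpha_2 f_2(u)f_2(u_{20})$, and similarly for the other coefficients. Each monomial coefficient of $J_{P,Q}$ is then the $\eta$-linear part of the corresponding single-operator coefficient \eqref{eq:op-9}--\eqref{eq:op-15}, now bilinear in $(f_1,f_2)$. The decisive observation is that the coefficient of $\lambda_1^4\mu_1^2$, the polarization of \eqref{eq:op-9}, factors as
\begin{equation*}
\alpha_1\alpha_2\,\bigl[f_1(u_{40})f_2(u)-f_1(u)f_2(u_{40})\bigr]\,\bigl[f_1(u_{20})f_2'(u_{20})-f_1'(u_{20})f_2(u_{20})\bigr].
\end{equation*}
The second bracket is the Wronskian $W(f_1,f_2)=f_1 f_2'-f_1' f_2$ evaluated at $u_{20}$, while the first is a ``cross'' term in the variables $(u,u_{40})$; since the two factors depend on disjoint lattice variables, their product vanishes identically if and only if one of them does, and either possibility forces $f_1/f_2$ to be constant, i.e.\ $f_2=\kappa f_1$.

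The remaining work is to cover the cases in which $\alpha_1\alpha_2=0$. Because the normal form \eqref{eq:2ndOrderNormal} requires $(\alpha_i,\beta_i,\gamma_i)$ not all to vanish, both $P$ and $Q$ carry at least one nonzero second-order coefficient; choosing one from each and reading off the matching mixed coefficient of $J_{P,Q}$ (for example the polarizations of \eqref{eq:op-12}, \eqref{eq:op-13}, \eqref{eq:op-14} or \eqref{eq:op-15} when the two chosen directions differ) again yields an expression of the shape $(\text{const})\cdot f_1\cdot f_2\cdot W(f_1,f_2)$. For instance, with $P$ carrying only $\alpha_1$ and $Q$ only $\gamma_2$, the coefficient of $\lambda_1^2\lambda_2^2\mu_1^2$ collapses to $-\alpha_1\gamma_2\,f_1(u)f_2(u_{22})\,W(f_1,f_2)(u_{20})$, so $W(f_1,f_2)\equiv0$ and once more $f_2=\kappa f_1$. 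The only genuine effort is the bookkeeping of which monomial to extract in each sub-case; since the Wronskian structure makes every sub-case collapse to the same conclusion, I do not anticipate a real obstacle, merely a finite case analysis.

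For sufficiency the argument is immediate: if $f_2=\kappa f_1=:\kappa f$ then $Q=\kappa^2 f(u)(\alpha_2\cS_1^2+\beta_2\cS_1\cS_2+\cdots)\circ f(u)$ is again of the normal form \eqref{eq:2ndOrderNormal} with the \emph{same} function $f$, so $P_\eta=f(u)\bigl((\alpha_1+\eta\kappa^2\alpha_2)\cS_1^2+\cdots\bigr)\circ f(u)$ is of normal form for every $\eta$ (its second-order constants are not all zero for generic $\eta$, and reduce to an admissible first-order operator otherwise). By Proposition~\ref{prop:2D} each such operator is Hamiltonian, hence $P_\eta$ is Hamiltonian for all $\eta$ and the pair is compatible, closing the equivalence.
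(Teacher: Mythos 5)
Your proposal is correct and rests on the same computation as the paper -- expanding the Jacobiator of the pencil $P+\eta Q$ and requiring the mixed, $\eta$-linear coefficients to vanish -- but you organize the necessity argument in a genuinely different and more transparent way. The paper argues by contradiction: assuming $f_1\not\propto f_2$, the coefficient of $\lambda_1^4\mu_1^2$ forces $\alpha_1\alpha_2=0$, and a chain of further coefficients then forces every constant of $Q$ to vanish, i.e.\ $Q=0$. You instead exhibit an explicit factorization of each mixed coefficient as $(\text{product of constants})\times(\text{cross term in disjoint lattice variables})\times W(f_1,f_2)$, which immediately yields the dichotomy ``the constants' product vanishes or $f_2=\kappa f_1$'' and reduces the rest to pairing one nonzero leading constant from each operator. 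I checked your two sample factorizations (the polarizations of \eqref{eq:op-9} and of the coefficient of $\lambda_1^2\lambda_2^2\mu_1^2$ in \eqref{eq:op-14}); both are correct, and the Wronskian structure is a real gain in readability over the paper's ``this pattern continues'' phrasing -- though, like the paper, you ultimately delegate the remaining sub-cases to the same (computer-verifiable) pattern rather than listing them. One point to tighten: Section 4 of the paper deliberately folds the first-order operators into the form \eqref{eq:2ndOrderNormal} by allowing $\alpha=\beta=\gamma=0$, so your case analysis, which pairs a nonzero \emph{second-order} constant from each operator, should also cover the pairing of a second-order constant of one operator with a first-order constant ($\delta$ or $\epsilon$) of the other; the relevant mixed coefficients (polarizations of \eqref{eq:op-19} and its companions) carry the same Wronskian factor, so the conclusion is unaffected, but the sub-case should be named. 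Your sufficiency argument coincides with the paper's.
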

\begin{proof} It is immediate to check that the ansatz $f_2(u)=\kappa f_1(u)$ (here $X_i$ denotes the parameter $X$ (either function or constant)  used in \eqref{eq:2ndOrderNormal} for the definition of the operator $P_{1,2}=P,\,Q$) is a sufficient condition to have a bi-Hamiltonian pair. To prove that such a condition is also necessary, first we notice that, for $f_1$ not proportional to $f_2$, the coefficient of $\lambda_1^4\mu_1^2$ in the Jacobiator requires that $\alpha_1\alpha_2=0$; if $\alpha_1$ is nonvanishing, further terms require that $\beta_2=\gamma_2=\delta_2=\epsilon_2=0$, therefore $\alpha_1$ must be vanishing too. However, the coefficient of $\lambda_1^2\lambda_2^2\mu_1\mu_2$ can vanish only if $\beta_1\beta_2=0$\dots again, if $\beta_1$ is nonvanishing all $\beta_2$, $\gamma_2$, $\delta_2$ and $\epsilon_2$ must be zero. This pattern continues until we rule out the possibility that $f_2$ is not proportional to $f_1$, because that would imply that $\alpha_2=\beta_2=\gamma_2=\delta_2=\epsilon_2=0$, i.e. $P_2=0$.
\end{proof}
\begin{proposition}\label{prop:biHamN1}
    Two two-dimensional scalar Hamiltonian $(P,Q)$ operators of the form, respectively, \eqref{eq:2ndOrderNormal} and \eqref{eq:2ndOrder1D} are compatible if and only if $f_2=\kappa f_1$, $\alpha=\beta=\gamma=0$, and either $\epsilon=0$ or $\delta=0$ depending on whether $Q$ is of the form $P_1$ or $P_2$ of \eqref{eq:2ndOrder1D}. 
\end{proposition}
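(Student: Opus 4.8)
The plan is to follow the strategy already used for Proposition~\ref{prop:biHamNN}. Since both $P$ and $Q$ are Hamiltonian, the Jacobiator of $P_\eta=P+\eta Q$ is bilinear in the $\lambda$-bracket, so it expands as $J_{P_\eta}(\lambda,\mu)=J_P+\eta\,M(\lambda,\mu)+\eta^2 J_Q$ with $J_P=J_Q=0$; compatibility is therefore equivalent to the vanishing of the single \emph{mixed} Jacobiator $M(\lambda,\mu)$, the polarized term obtained by inserting one $P$-bracket and one $Q$-bracket into the PVA-Jacobi identity \eqref{Jacobi-identity}. I would compute $M$ with the master formula \eqref{master} and read off the constraints from the coefficients of the monomials $\lambda_1^a\lambda_2^b\mu_1^c\mu_2^d$.

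For sufficiency, I would observe that under the ansatz $f_2=\kappa f_1$, $\alpha=\beta=\gamma=0$ and (when $Q=P_1$) $\epsilon=0$, the operator $P$ collapses to $\delta\,f_1(\cS_1-\cS_1^{-1})\circ f_1$, an essentially one-dimensional $(-1,1)$-order operator along the same direction as $Q$. Both operators then live entirely in the $\cS_1$-direction and involve the single function $f_1$, so $M$ reduces to the one-dimensional mixed Jacobiator of a general-form operator with a \eqref{eq:2ndOrder1D}-form operator, which vanishes by the Case~2 pair of \cite{DSKVW18-1}. The case $Q=P_2$ is identical after the relabelling $\cS_1\leftrightarrow\cS_2$, under which $P$ in \eqref{eq:2ndOrderNormal} is invariant with $\alpha\leftrightarrow\gamma$, $\delta\leftrightarrow\epsilon$; this symmetry also reduces the whole necessity argument to the single case $Q=P_1$.

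For necessity I would proceed in two stages, exploiting that $Q=P_1$ is $\cS_2$-free, so that \emph{all} of the $\lambda_2,\mu_2$-dependence of $M$ originates from single insertions of the essentially two-dimensional and transverse terms $\beta\cS_1\cS_2$, $\gamma\cS_2^2$, $\epsilon\cS_2$ of $P$. Selecting monomials of extremal degree in $\mu_2$ (and then $\lambda_2$) isolates these contributions one at a time, in a triangular fashion mirroring Lemma~\ref{lemma3-2}: the extremal-$\mu_2$ coefficient forces $\gamma=0$, the next one forces $\beta=0$, and a further one forces $\epsilon=0$, since none of these terms has a cancellation partner coming from the $\cS_1$-only operator $Q$. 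After this stage $P$ is reduced to the purely one-dimensional operator $f_1(\alpha\cS_1^2+\delta\cS_1-\delta\cS_1^{-1}-\alpha\cS_1^{-2})\circ f_1$.

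In the second stage both $P$ and $Q$ are one-dimensional in the $\cS_1$-direction, and $M$ reduces to the one-dimensional mixed Jacobiator of a general-form order $(-2,2)$ operator with a \eqref{eq:2ndOrder1D}-form one. By the one-dimensional classification of \cite{DSKVW18-1} such a pair is compatible only when the general-form operator is of order $(-1,1)$, forcing $\alpha=0$, and when the two functions are proportional, giving $f_2=\kappa f_1$; equivalently, one extracts the coefficient of $\lambda_1^4\mu_1^2$ to obtain $\alpha=0$ (here $F_{11}^{(Q)}$ carries a $u_{10}$-dependence through $F^\alpha$ that $F_{11}^{(P)}$ lacks, so the polarization cannot vanish for $\alpha\neq0$) and then lower coefficients to obtain the proportionality, exactly as in Proposition~\ref{prop:biHamNN}. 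I expect the main obstacle to be the first stage: verifying rigorously that the $\beta,\gamma,\epsilon$ contributions to $M$ admit no internal cancellation, so that the extremal-$\mu_2$ coefficients genuinely isolate them — this requires tracking the $\cS_2$-grading of $M$ carefully, since a priori the three terms could conspire at a common monomial.
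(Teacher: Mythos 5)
Your overall strategy---polarizing the Jacobiator of $P_\eta=P+\eta Q$ and reading off constraints from the coefficients of the monomials $\lambda_1^a\lambda_2^b\mu_1^c\mu_2^d$, then falling back on the one-dimensional classification of \cite{DSKVW18-1}---is essentially the computation the paper performs, and both your sufficiency argument (collapse to a known one-dimensional compatible pair) and your second stage are sound. The genuine gap is the point you yourself flag as the main obstacle, and it is not just bookkeeping: the claim that the \emph{extremal}-$\mu_2$ coefficient forces $\gamma=0$ is false as stated. Since $Q$ is $\cS_1$-only, the coefficient of a genuinely extremal monomial such as $\lambda_1^2\mu_2^2$ receives contributions only from the two cross terms in which the $\gamma$-part of $P$ occupies the $\mmu$-slot and $Q$ the $\mlambda$-slot, and these combine, up to sign, into $\gamma\,f_1(u_{02})\,\bigl(f_1'f_2-f_1f_2'\bigr)\,(\cS_1 F^{\alpha})(\cS_1^2 f_2)$: it is weighted by the Wronskian of $f_1$ and $f_2$, hence vanishes identically whenever $f_2\propto f_1$ and gives no information on $\gamma$ in exactly the regime where the nontrivial compatible pairs live. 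The same degeneracy affects the analogous extremal coefficients for $\beta$ and $\epsilon$ (and for $\alpha$, via $\lambda_1^4\mu_1^2$). So ``no cancellation partner coming from $Q$'' does not imply that the coefficient is nondegenerate in $\gamma$, and your first stage cannot be run uniformly before the proportionality question is settled.

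The paper resolves this by branching at the outset: the coefficient of $\lambda_1^4\mu_1^2$ yields the dichotomy ``$f_2\propto f_1$ or $\alpha=0$''. In the proportional branch, $\gamma$ is killed by the \emph{sub}-extremal coefficient of $\lambda_1^2\mu_1\mu_2^2$, whose only contribution is $-\gamma\,(\partial_{u_{10}}F_{11}^{(Q)})\,\cS_1\bigl(f_1 f_1(u_{02})\bigr)$; this is nondegenerate precisely because the leading coefficient $F_{11}^{(Q)}=f_2\,(\cS_1F^{\alpha})(\cS_1^2f_2)$ of $Q$ genuinely depends on $u_{10}$, i.e.\ because $F$ is non-constant---a hypothesis built into \eqref{eq:2ndOrder1D} that your argument never invokes. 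In the non-proportional branch the Wronskian-weighted coefficients you describe do become effective and force $\alpha=\beta=\gamma=\epsilon=0$ and then $\delta=0$, so $P=0$ and proportionality is necessary after all. If you replace your extremal monomials by ones carrying one extra power of $\mu_1$ (so that the coefficient picks up $\partial_{u_{10}}$ of $Q$'s coefficients rather than the Wronskian of $f_1,f_2$), or adopt the paper's initial dichotomy, the rest of your plan goes through.
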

In short, this reproduces the one-dimensional picture of De Sole et al. \cite{DSKVW18-1}
\begin{proof}
Without loss of generality, let us consider $Q$ of the form $P_1$ of \eqref{eq:2ndOrder1D}. Calculating the Jacobiator for $P_\eta$, we observe that the coefficient of $\lambda_1^4\mu_1^2$ vanishes either for $f_1f_2'-f_2f_1'=0$ or for $\alpha=0$. In the former case, whose solution is $f_2(u)=\kappa f_1(u)$, we find that the vanishing of $\alpha$, $\beta$, $\gamma$ and $\epsilon$ are, respectively, necessary conditions for the vanishing of the coefficients of $\lambda_1^3\mu_1^2$, $\lambda_1^2\lambda_2\mu_1^2$, $\lambda_1^2\mu_1\mu_2^2$, and $\lambda_1^2\lambda_2\mu_1\mu_2^{-1}$, and sufficient conditions for the vanishing of full Jacobiator.

Alternatively, if we do not impose the proportionality between $f_1$ and $f_2$ we have that necessary conditions for the vanishing of $\lambda_1^4\mu_1^2$, $\lambda_1^3\lambda_2\mu_1^2$, $\lambda_1^2\lambda_2^2\mu_1^2$, and  $\lambda_1^2\mu_1^2\mu_2^{-1}$ are, respectively, $\alpha=\beta=\gamma=\epsilon=0$. Imposing them, the vanishing of the coefficient of $\lambda_1^3\mu_1^2$ for $f_1$ not proportional to $f_2$ requires $\delta=0$, hence $P=0$.
\end{proof}

\begin{proposition}\label{prop:biHam11}
Two two-dimensional scalar Hamiltonian operators of the form \eqref{eq:2ndOrder1D} with $\alpha\neq 0$ are compatible if and only if the shift operators are in the same direction and $f_2=\kappa f_1$, $\alpha_2=\kappa \alpha_1$ for an arbitrary constant $\kappa$.
\end{proposition}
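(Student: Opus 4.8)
The plan is to compute the Jacobiator $J(\lambda,\mu)$ for the linear pencil $P_\eta = P + \eta Q$ directly, using the master formula \eqref{master}, and exploit the fact that since $P$ and $Q$ are \emph{individually} Hamiltonian, the Jacobiator of the pencil reduces to the cross term $\eta\cdot\widetilde{J}(P,Q)$, where $\widetilde{J}$ is the bilinear (in $P,Q$) part of the Jacobi expression; thus compatibility is equivalent to the vanishing of $\widetilde{J}(P,Q)$. Here both $P$ and $Q$ are of the one-dimensional form \eqref{eq:2ndOrder1D}, each attached to a direction (say $\cS_i$ for $P$ and $\cS_j$ for $Q$) and carrying its own data $(f_1,F_1,\alpha_1,\beta_1)$ and $(f_2,F_2,\alpha_2,\beta_2)$. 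Sufficiency of the stated condition is the easy direction: when $i=j$, $f_2=\kappa f_1$, and $\alpha_2=\kappa\alpha_1$, the two operators are (up to the overall constant) the same operator \eqref{eq:2ndOrder1D} evaluated at proportional parameters, so every linear combination is again of the form \eqref{eq:2ndOrder1D} and hence Hamiltonian by Proposition~\ref{prop:2D}; this can be checked by inspection of the bracket.

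For necessity I would split into two regimes according to whether the two operators share a direction. First consider the \emph{orthogonal} case, say $P$ built from $\cS_1$ and $Q$ from $\cS_2$. Here I expect that the genuinely two-directional cross terms in $\widetilde{J}$ cannot cancel: a monomial such as the coefficient of $\lambda_1^4\mu_2^2$ (or a similar mixed leading term involving one double shift in each direction) will be proportional to the product of the leading coefficients $\alpha_1\alpha_2$ times a nonvanishing expression in $f_1,f_2,F_1,F_2$, forcing $\alpha_1\alpha_2=0$; since the hypothesis is $\alpha_1,\alpha_2\neq0$, the orthogonal case is ruled out entirely, so compatible pairs must be along the same direction. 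Then, in the \emph{same-direction} case ($i=j$, say both along $\cS_1$), the problem collapses to a purely one-dimensional computation, and I would read off from the vanishing of the leading coefficient (e.g. of $\lambda_1^4\mu_1^2$) the constraint $f_1 f_2' - f_2 f_1' = 0$, i.e. $f_2 = \kappa f_1$; substituting this and examining the next coefficient (e.g. of $\lambda_1^3\mu_1^2$, which now involves $F_1,F_2,\alpha_1,\alpha_2$) should pin down the proportionality $\alpha_2 = \kappa\alpha_1$ of the leading parameters, after which the remaining coefficients vanish identically.

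The main obstacle I anticipate is organizing the orthogonal-direction analysis cleanly. Unlike the single-direction subcase, which is essentially the one-dimensional computation already carried out in \cite{DSKVW18-1}, the mixed pencil produces a large number of cross monomials in the four variables $(\lambda_1,\lambda_2,\mu_1,\mu_2)$, and one must be careful that the candidate ``obstruction'' coefficient genuinely isolates $\alpha_1\alpha_2$ rather than receiving competing contributions that might conspire to cancel; identifying one such coefficient whose vanishing is unconditionally equivalent to $\alpha_1\alpha_2=0$ is the crux. The parenthetical qualification ``with $\alpha\neq0$'' in the statement is precisely what makes this clean, since it excludes the degenerate subcases of \eqref{eq:2ndOrder1D} that reduce to the normal form \eqref{eq:2ndOrderNormal} and are already covered by Propositions~\ref{prop:biHamNN} and~\ref{prop:biHamN1}. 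I would therefore handle the bulk of the verification with a computer algebra system, as the authors indicate, and present only the decisive coefficients: the one forcing orthogonality to fail, and the two forcing $f_2=\kappa f_1$ and $\alpha_2=\kappa\alpha_1$ in the aligned case.
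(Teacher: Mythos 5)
Your proposal is correct and follows essentially the same route as the paper: compatibility is tested via the Jacobiator of the pencil, the aligned case is settled by the coefficient of $\lambda_1^4\mu_1^2$ (forcing $f_2=\kappa f_1$, whence $F_2=\rho\,F_1^{1/\kappa}$ and then $\alpha_2=\kappa\alpha_1$), and the bulk of the verification is delegated to a computer algebra system. The only notable difference is that the paper dispatches the orthogonal case more cheaply, as a by-product of Proposition~\ref{prop:2D} --- the sum $P_1+P_2$ at $\eta=1$ is not among the classified Hamiltonian operators of order $(-2,2)$, so no coefficient hunt is needed there (and note your candidate monomial $\lambda_1^4\mu_2^2$ does not actually occur in the cross term, since the $\lambda_1$-power can only exceed $2$ when accompanied by positive powers of $\mu_1$; the paper's confirming computation instead finds $f_2=\kappa f_1$ and then $\alpha_1=\alpha_2=0$ as a cascade of conditions).
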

\begin{proof}
First of all, we rule out the possibility that $P=P_1$ and $Q=P_2$, namely that the two essentially one-dimensional operators are in different directions. This can be seen as a by-product of Proposition \ref{prop:2D}: the operators $P_i$ are individually Hamiltonian, but their sum $P_1+P_2$ is not. $P_1$ and $P_2$, therefore, are never compatible. This is confirmed by explicit computations of the Jacobi identity: we first find that a necessary condition is $f_2=\kappa f_1$ and then that $\alpha_1=\alpha_2=0$.

On the other hand, let us consider $P=P_1$ with $f=f_1$, $\alpha=\alpha_1$, $\beta=\beta_1$ and $Q=P_1$ with different parameters $f=f_2$, $\alpha=\alpha_2$, $\beta=\beta_2$.  The vanishing of the coefficient $\lambda_1^4\mu_1^2$ in the Jacobiator requires as necessary condition $f_2=\kappa f_1$; moreover, after this ansatz we have that $F_2=\rho\;F_1^{1/\kappa}$ for arbitrary constants $\kappa$ and $\rho$ and that the Jacobiator vanishes if and only if $\alpha_2=\kappa \alpha_1$. Note that the Poisson pencil $P_\eta$ in this case would correspond to the operator $P_i$ with the rescaling $f(u)\mapsto \sqrt{1+\eta\,\kappa^2}f(u)$ and $\alpha\mapsto\alpha/\sqrt{1+\eta\,\kappa^2}$.
\end{proof}

Recall from \cite{DSKVW18-1} that, given a bi-Hamiltonian pair of difference operators, the Lenard-Magri scheme is an algorithm to construct a hierarchy of conserved quantities in involution. Such a hierarchy is integrable if the scheme produces infinitely many independent conserved quantities. To construct the hierarchy, one normally starts with a Casimir function of one of the two Hamiltonian structures, but it is crucial that such a Casimir is not a Casimir of the second structure too. From the results of Proposition \ref{prop:biHamNN}, \ref{prop:biHamN1}, and \ref{prop:biHam11}, we readily see that among all the possible two-dimensional, scalar bi-Hamiltonian pairs for which the Casimir functions do not coincide are the pairs of the form
$P=f(u)(\cS_i-\cS_i^{-1})\circ f(u)$ and $Q=P_i$ as in \eqref{eq:2ndOrder1D}. The Casimir function of $P$ is $\int h_{-1}=\int^u f^{-1}(s)\ud s$ which is not, as it can be immediately veerified, a Casimir of $Q$. These bi-Hamiltonian pairs are one-dimensional and have been already studied in \cite{DSKVW18-1}; their corresponding integrable hierarchy has been proved to be (always equivalent to) the  Volterra hierarchy.

\section{Conclusive remarks}
\quad In this work, we addressed the study of multidimensional multiplicative Poisson Vertex Algebras, extended to the $D>1$ case of the multiplicative $\lambda$-bracket, later we give the Master formula and proved the Jacobi identity of the multiplicative $\lambda$-bracket, which is similar to the $D=1$ case. We define the local functionals, and show the relation of the PVAs and Hamiltonian structures.
Then we classify the 2-dimensional scalar difference operators with the order $(-2,2)$. A Hamiltonian difference operator of $(-1,1)$-order
 is of the form
 \begin{equation*}
 P=f(u)\left(\alpha \cS_1+\beta \cS_2-\beta \cS_2^{-1}-\alpha \cS_1^{-1}\right)\circ f(u)
 \end{equation*}
It is straightforward to generalize the result of Proposition \ref{thm:ham1-2d} to arbitrary $D$, for which we have
$$
P=\sum_{i=1}^D \alpha_i f(u)\left(S_i-S_i^{-1}\right)\circ f(u).
$$
All the the two-dimensional scalar Hamiltonian operators of order $(-2,2)$ are either of the form 
\begin{equation*}
P=f(u)\left(\alpha \cS_1^2+\beta \cS_1\cS_2+\gamma \cS_2^2+\delta S_1 +\epsilon \cS_2-(\cS_i\leftrightarrow \cS_i^{-1})\right)\circ f(u)
\end{equation*}
for arbitrary values of the constants (with $\alpha,\beta,\gamma$ not all vanishing) or
\begin{align*}
P_i&=f\cS_i\circ F^\alpha\cS_i\circ f+f\left(\left(F^\alpha+\beta\right)\cS_i+\cS_i\circ\left(F^\alpha+\beta\right)\right)\circ f\\ \notag
&-f\left(\left(F^\alpha+\beta\right)\cS^{-1}_i+\cS^{-1}_i\circ\left(F^\alpha+\beta\right)\right)\circ f-f\cS^{-1}_i\circ F^\alpha\cS^{-1}_i\circ f
\end{align*}
Moreover, for the two-dimensional case, two scalar Hamiltonian operators $P $and $Q$ are compatible if and only if the functions $f_1$ and $f_2$ (which definition the operators $P$ and $Q$) are proportional.

The generalization of this result to $D>2$ case is straightforward, too -- we anticipate that a generic Hamiltonian operator of order $(-2,2)$ is of the form 
\begin{equation}
    P=\sum_{i,j=1}^Df(u)(\alpha_{ij}\cS_i\cS_j+\beta_i\cS_i-\beta_i\cS_i^{-1}-\alpha_{ij}\cS_i^{-1}\cS_j^{-1})\circ f(u)
\end{equation}
for arbitrary constants $\alpha_{ij}=\alpha_{ji}$, $\beta_i$ and a function $f(u)$.

Operators of this ``general'' form (the name has been firsr used in \cite{10.1093/imrn/rny242}) exist for arbitrary $(-N,N)$ order operators, as sums of homogeneous operators
\begin{equation*}
   P_k=f(u)\left(\sum_{i_1,\ldots,i_k=1}^D\alpha^{(k)}_{i_1i_2\cdots i_k}\left(\cS_{i_1}\cS_{i_2}\ldots\cS_{i_k}-\left(\cS_{i_1}\cS_{i_2}\ldots\cS_{i_k}\right)^{-1}\right)\right)\circ f(u).
\end{equation*}
However, further investigations are required to rule out the existence of operators of "complementary" (or otherwise exceptional) type.

The non-local version of multiplicative Poisson vertex algebra has been defined in \cite{DSKVW19-9}; the key difference is the existence of a non-local $\lambda$-bracket, $\{-_ \lambda -\}: \mathcal{A}\otimes \mathcal{A \rightarrow}\mathcal{A}[[\lambda,\lambda^{-1}]]$, taking value in formal series of $\lambda$ and $\lambda^{-1}$. The main idea is quite simple: it is possible to represent a rational difference operator, i.e. the ratio of two local difference operators, by infinite series expansion of the denominators (mapping $\A(\lambda)\hookrightarrow \A[[\lambda,\lambda^{-1}]]$). The authors use the language of non-local multiplicative PVAs to generalize the construction of the q-deformed $W$ algebra introduced by Frenkel and Reshetikhin in \cite{FR96}; they obtain the bi-Hamiltonian structure and conjecture that the corresponding hierarchy is integrable. The generalization of their construction to the multidimensional case is an open problem with important applications: for instance, the series of papers by Wiersma and Capel \cite{WIERSMA0, WIERSMA1, WIERSMA2, WIERSMA3} presents several examples of nonlocal two-dimensional Hamiltonian structures for integrable D$\Delta$Es, albeit of a rather simple form.

 In the same way as the theory of PVA proved itself a versatile tool and an effective language for the study of Hamiltonian integrable PDEs, in this work we have showed that multiplicative PVAs enjoy similar advantages in the category of multidimensional differential-difference equations.

 The two notions of PVA and mPVA are connected, as detailed in the recent work by one of the authors \cite{CV25}, where an explicit mapping between difference and differential algebra is proposed. Outside the context of PVAs, a similar argument is well-known and presented, for example, in \cite{CDZ}. Moreover, following the ideas of Kuperschmidt, it is possible to introduce more general PVAs of ``mixed'' type, that can be used to describe systems of functions depending on a set of discrete and continuous independent variables. For example, consider the case of the (semi) continuous limit of the 2D Toda equation presented in \cite{WIERSMA3}; the integrable hierarchy for a function $\psi(\{t_i\},x,n)$ -- $\{t_i\}$ are the hierarchy times, $x$ a continuous variable and $n$ a discrete variable of the original two-dimensional lattice -- is Hamiltonian with respect to the operator 
\begin{equation}
  \partial_x^{-1}\left(\mathcal{S}+\mathcal{S}^{-1}-2\right),
\end{equation}
where $\cS$ denotes the shift in the discrete direction. The natural generalization of both standard and multiplicative PVAs that would correspond to an operator of this type is obtained by a $\lambda$-bracket with two formal variables $\lambda|\mu$, such that (for example) $\{\psi_{\lambda|\mu}\cS\psi\}=\mu\cS\{\psi_{\lambda|\mu}\psi\}$ and $\{\psi_{\lambda|\mu}\dev\psi\}=(\lambda+\dev)\{\psi_{\lambda|\mu}\psi\}$. The investigation and formalization of this idea will be discussed in a future work.

\paragraph*{Acknowledgments}
M.~C.~is a member of the GNFM INdAM group and he is supported by the National Science Foundation of China (Grants Nos. 12101341 and 12431008) and by Ningbo University High Level Science and Technology Projects Cultivation Plan. Y.~P.~ is supported by Ningbo University Second-Class Academic Scholarship (Upper-division) and National Graduate Student Stipend. Both the authors are supported by Ningbo City Yongjiang Innovative Talent Program and Ningbo University Talent Introduction and Research Initiation Fund.

\appendix \section{Proof of Theorem~\ref{theorem1}: skewsymmetry and Jacobi identity}
\subsection{Skewsymmetry}\label{app:1}
Consider the expression for $\{f_{\mlambda}g\}$ according to the master formula \eqref{master}:
    \begin{equation}\label{eq:provaskew1}
        \begin{aligned}
        \{f_{\boldsymbol{\lambda}} g\}&=\sum_{\substack{i, j\in I \\ M, N ,T\in \mathbb{Z}^D}} \frac{\partial g}{\partial u_N^j}(\boldsymbol{\lambda \cS})^N\circ P^{ji}_{T}(\mlambda\boldsymbol{\cS})^{T}(\boldsymbol{\lambda \cS})^{-M} \frac{\partial f}{\partial u_M^i}\\
        &=\sum_{\substack{i, j\in I \\ M, N ,T\in \mathbb{Z}^D}} \frac{\partial g}{\partial u_N^j}(\boldsymbol{\cS}^N P^{ji}_{T})\boldsymbol{\cS}^{N+T-M} \frac{\partial f}{\partial u_M^i}\boldsymbol{\lambda}^{N+T-M}.
        \end{aligned}
    \end{equation}
Here, we denote with $(\boldsymbol{\cS}^M f)$ the operator $\boldsymbol{\cS}^M$ acting only on the difference function $f$, while by $(\mlambda\boldsymbol{\cS})^M\circ f\boldsymbol{\cS}^N$ the composition of difference operators. From \eqref{eq:provaskew1}, the expression on the RHS of the skewsymmetry property \eqref{3} can be expanded as
    \begin{align*}       
            {}_{\rightarrow}\{f_{(\boldsymbol{\lambda \cS})^{-1}}g\}&=
            \sum_{\substack{i, j\in I \\ M, N ,T\in \mathbb{Z}^D}}{(\boldsymbol{\lambda \cS})}^{-(N+T-M)}\circ \frac{\partial g}{\partial u_N^j}(\boldsymbol{\cS}^N P^{ji}_{T})\boldsymbol{\cS}^{N+T-M} \frac{\partial f}{\partial u_M^i}\\
            &=\sum_{\substack{i, j\in I \\ M, N ,T\in \mathbb{Z}^D}}  \frac{\partial f}{\partial u_M^i}(\boldsymbol{\lambda \cS})^M\circ (\boldsymbol{\cS}^{-T} P^{ji}_{T})(\boldsymbol{\lambda \cS})^{-T}(\boldsymbol{\lambda \cS})^{-N}\frac{\partial g}{\partial u_N^j}.\\
            \intertext{Observe now that the skewsymmetry of the bracket between the generators is the same as the identity \eqref{eq:skewgen}, so that}
             {}_{\rightarrow}\{f_{(\boldsymbol{\lambda \cS})^{-1}}g\}&=-\sum_{\substack{i, j\in I \\ M, N ,T\in \mathbb{Z}^D}}  \frac{\partial f}{\partial u_M^i}(\boldsymbol{\lambda \cS})^M \circ P^{ij}_{-T}(\boldsymbol{\lambda \cS})^{-T}\cdot(\boldsymbol{\lambda \cS})^{-N}\frac{\partial g}{\partial u_N^j}\\
            &=-\sum_{\substack{i, j\in I \\ M, N \in \mathbb{Z}^D}}  \frac{\partial f}{\partial u_M^i}(\boldsymbol{\lambda   \cS})^M \{u^j_{\boldsymbol{\lambda \cS}}u^i\}(\boldsymbol{\lambda \cS})^{-N}\frac{\partial g}{\partial u_N^j}
            =-\{g_{\boldsymbol{\lambda}} f\}.
    \end{align*}
    
\subsection{Jacobi identity}\label{app:2}
First, we prove that the Jacobi identity holds true for any triple of \emph{shifted} generators $(u^i_M,u^j_N,u^k_P)$. Using the sesquilinearity property, we have
\begin{align}\notag
\{{u^i_M}_{\boldsymbol{\lambda}}\{{u^j_N}_{\boldsymbol{\mu}}u^k_P\}\} &=\{{u^i_M}_{\boldsymbol{\lambda}}{(\boldsymbol{\mu \cS })^P}\{{u^j}_{\boldsymbol{\mu}}u^k\}{\boldsymbol{\mu}}^{-N}\}\\\notag
&={\boldsymbol{\mu}^{P-N}}{(\boldsymbol{\lambda \cS})^P} \{u^i_{\boldsymbol{\lambda}}\{u^j_{\boldsymbol{\mu}}u^k\}\}{\boldsymbol{\lambda}^{-M}}\\\label{eq:proveJacobi1}
&={\boldsymbol{\mu}^{P-N}}{\boldsymbol{\lambda}^{P-M}}{\boldsymbol{\cS}^P} \{u^i_{\boldsymbol{\lambda}}\{u^j_{\boldsymbol{\mu}}u^k\}\},\\\label{eq:proveJacobi2}
\{{u^j_N}_{\boldsymbol{\mu}}\{{u^i_M}_{\boldsymbol{\lambda}}u^k_P\}\}& ={\boldsymbol{\lambda}^{P-M}}{\boldsymbol{\mu}^{P-N}}{\boldsymbol{\cS}^P} \{u^j_{\boldsymbol{\mu}}\{u^i_{\boldsymbol{\lambda}}u^k\}\},\\\notag
\{{\{{u^i_M}_{\boldsymbol{\lambda}}{u^j_N}\}}_{\boldsymbol{\lambda \mu}}u^k_P\} &=\{[(\boldsymbol{\lambda \cS})^N\{{u^i}_{\boldsymbol{\lambda}}{u^j}\}{\boldsymbol{\lambda}}^{-M}]_{\boldsymbol{\lambda \mu}}u^k_P\}\\\notag
        &={\boldsymbol{\lambda}^{N-M}}{(\boldsymbol{\lambda \mu \cS})^P} \{{\{u^i_{\boldsymbol{\lambda}}u^j\}}_{\boldsymbol{\lambda \mu}}u^k\}{\boldsymbol{\lambda}^{-N}}{\boldsymbol{\mu}^{-N}}\\\label{eq:proveJacobi3}
        &={\boldsymbol{\lambda}^{P-M}}{\boldsymbol{\mu}^{P-N}}{\boldsymbol{\cS}^P}\{{\{u^i_{\boldsymbol{\lambda}}u^j\}}_{\boldsymbol{\lambda \mu}}u^k\}.
\end{align}
We see that all the expressions \eqref{eq:proveJacobi1}, \eqref{eq:proveJacobi2} and \eqref{eq:proveJacobi3} share a common prefactor $\mlambda^{P-M}\mmu^{P-N}\boldsymbol{\cS}^P$, so that their combination $\text{\eqref{eq:proveJacobi1}}- \text{\eqref{eq:proveJacobi2}}-\text{\eqref{eq:proveJacobi3}}$ vanishes if and only if the Jacobi identity between the generators holds true. We then move on, to prove that the Jacobi identity holds true in general.

According to (\ref{2.2}) and (\ref{6}) we have
\begin{equation}\label{3.1}
    \begin{aligned}
        \left\{f_{\boldsymbol{\lambda}} g\right\}&=\sum_{\substack{i, j\in I \\ M, N \in \mathbb{Z}^D}} \frac{\partial g}{\partial u^j_N}{\left\{{u^i_M}_{\boldsymbol{\lambda \cS}} u^j_N\right\}}\frac{\partial f}{\partial u^i_M}\\ &=
    \sum_{\substack{i, j\in I \\ M, N,T \in \mathbb{Z}^D}} \frac{\partial g}{\partial u^j_N} ({\boldsymbol{\cS}}^{N+T-M}\frac{\partial f}{\partial u^i_M}){{\left\{{u^i_M}_{\boldsymbol{\lambda }} u^j_N\right\}}}_{(N+T-M)}
    \end{aligned}
\end{equation}
where we denote $\{{u^i_M}_{\boldsymbol{\lambda}} u^j_N\}={\{{u^i_M}_{\boldsymbol{\lambda }} u^j_N\}}_{(N+T-M)}\mlambda^{N+T-M}$.

Then, direct computations show
\begin{equation}
    \begin{aligned}
        \{f_{\boldsymbol{\lambda}}\{g_{\boldsymbol{\mu}}h\}\}&=\sum_{j \in I, N \in \mathbb{Z}^D}\{f_{\boldsymbol{\lambda}}{\{{u^j_N}_{\boldsymbol{\mu \cS}}h\}}\frac{\partial g}{\partial u^j_N} \}\\&=
        \sum_{\substack{j\in I \\  N,T,P \in \mathbb{Z}^D}}{\{{u^j_N}_{\boldsymbol{\mu}}h\}}_{(P+T-N)}\{f_{\boldsymbol{\lambda}}({\boldsymbol{\cS}}^{P+T-N}\frac{\partial g}{\partial u^j_N})\}
       \\ &+\sum_{\substack{j,k\in I \\  N,T,P \in \mathbb{Z}^D}}({\boldsymbol{\cS}}^{P+T-N}\frac{\partial g}{\partial u^j_N})\{f_{\boldsymbol{\lambda}}\frac{\partial h}{\partial u^k_P}{\{{u^j_N}_{\boldsymbol{\mu}}u^k_P\}}_{(P+T-N)}\}\label{3.2}
    \end{aligned}
\end{equation}
The first term in the RHS of (\ref{3.2}) can be rewritten, using the identity in (\ref{3.1}), as
\begin{equation}
    \begin{aligned}
        &\sum_{\substack{j\in I \\  N,T,P \in \mathbb{Z}^D}}{\{{u^j_N}_{\boldsymbol{\mu}}h\}}_{(P+T-N)}\{f_{\boldsymbol{\lambda}}({\boldsymbol{\cS}}^{P+T-N}\frac{\partial g}{\partial u^j_N})\}\\
        &=\sum_{\substack{j\in I \\  N,T,P \in \mathbb{Z}^D}}{\{{u^j_N}_{\boldsymbol{\mu}}h\}}_{(P+T-N)}({\boldsymbol{\lambda \cS}})^{P+T-N}\{f_{\boldsymbol{\lambda}}\frac{\partial g}{\partial u^j_N}\}
        \\ &=\sum_{j\in I,N \in \mathbb{Z}^D}{\{{u^j_N}_{\boldsymbol{\lambda \mu \cS}}h\}}\{f_{\boldsymbol{\lambda}}\frac{\partial g}{\partial u^j_N}\}
    \end{aligned}
\end{equation}
The second term in the RHS of (\ref{3.2}) can be rewritten as 
\begin{align}
        &\sum_{\substack{j,k\in I \\  N,T,P \in \mathbb{Z}^D}}({\boldsymbol{\cS}}^{T_1}\frac{\partial g}{\partial u^j_N})\{f_{\boldsymbol{\lambda}}\frac{\partial h}{\partial u^k_P}{\{{u^j_N}_{\boldsymbol{\mu}}u^k_P\}}_{(T_1)}\}\notag\\
        &=\sum_{\substack{j,k\in I \\  N,T,P \in \mathbb{Z}^D}}({\boldsymbol{\cS}}^{T_1}\frac{\partial g}{\partial u^j_N})\frac{\partial h}{\partial u^k_P}\{f_{\boldsymbol{\lambda}}{\{{u^j_N}_{\boldsymbol{\mu}}u^k_P\}}_{(T_1)}\}
        +\sum_{\substack{j,k\in I \\  N,T,P \in \mathbb{Z}^D}}({\boldsymbol{\cS}}^{T_1}\frac{\partial g}{\partial u^j_N}){\{{u^j_N}_{\boldsymbol{\mu}}u^k_P\}}_{(T_1)}\{f_{\boldsymbol{\lambda}}\frac{\partial h}{\partial u^k_P}\}\notag\\
        &=\sum_{\substack{i,j,k\in I \\  N,T,P \in \mathbb{Z}^D}}({\boldsymbol{\cS}}^{T_2}\frac{\partial f}{\partial u^i_M})({\boldsymbol{\cS}}^{T_1}\frac{\partial g}{\partial u^j_N})\frac{\partial h}{\partial u^k_P}{\{{u^i_M}_{\boldsymbol{\lambda}}{\{{u^j_N}_{\boldsymbol{\mu}}u^k_P\}}_{(T_1)}\}}_{(T_2)}
        +\sum_{\substack{k,l\in I \\  P,Q \in \mathbb{Z}^D}}(\frac{\partial^2 h}{\partial u^k_P \partial u^l_Q}){\{{g}_{\boldsymbol{\mu}}u^k_P\}}\{f_{\boldsymbol{\lambda}} u^l_Q\}\notag
\end{align}
where $T_1=P+T-N, T_2=P+T-M$.
The second term in the LHS of (\ref{Jacobi-identity}) is the same as the first term, after exchanging $f$ with $g$ and $\lambda$ with $\mu$. Therefore, we get that the LHS of (\ref{Jacobi-identity}) is
\begin{align*}
    &\left\{f_{\boldsymbol{\lambda}}\left\{g_{\boldsymbol{\mu}} h\right\}\right\}-\left\{g_{\boldsymbol{\mu}}\left\{f_{\boldsymbol{\lambda}} h\right\}\right\}\\
    &=\sum_{\substack{i,j,k\in I \\  N,T,P \in \mathbb{Z}^D}}({\boldsymbol{\cS}}^{T_2}\frac{\partial f}{\partial u^i_M})({\boldsymbol{\cS}}^{T_1}\frac{\partial g}{\partial u^j_N})\frac{\partial h}{\partial u^k_P}({\{{u^i_M}_{\boldsymbol{\lambda}}{\{{u^j_N}_{\boldsymbol{\mu}}u^k_P\}}\}}-{\{{u^j_N}_{\boldsymbol{\mu}}{\{{u^i_M}_{\boldsymbol{\lambda}}u^k_P\}}\}})\notag\\  
    &+\sum_{j\in I,N \in \mathbb{Z}^D}{\{{u^j_N}_{\boldsymbol{\lambda \mu \cS}}h\}}\{f_{\boldsymbol{\lambda}}\frac{\partial g}{\partial u^j_N}\}
    -\sum_{i\in I,M \in \mathbb{Z}^D}{\{{u^i_M}_{\boldsymbol{\lambda \mu \cS}}h\}}\{g_{\boldsymbol{\mu}}\frac{\partial f}{\partial u^i_M}\}\\
\end{align*}
we next look at the RHS of (\ref{Jacobi-identity}).
\begin{align}
    &\{\left\{f_{\boldsymbol{\lambda}} g\right\}_{\boldsymbol{\lambda \mu}} h\}=\sum_{\substack{i,j\in I \\  M,N \in \mathbb{Z}^D}}\{{(\frac{\partial g}{\partial u^j_N}{\{{u^i_M}_{\boldsymbol{\lambda \cS}}{u^j_N}\}}\frac{\partial f}{\partial u^i_M})}_{\boldsymbol{\lambda \mu}}h\}\notag\\
    &=\sum_{\substack{i,j\in I \\  M,N \in \mathbb{Z}^D}}{\{{\frac{\partial g}{\partial u^j_N}}_{\boldsymbol{\lambda \mu \cS}}h\}}{\{f_{\boldsymbol{\lambda}}{u^j_N}\}}
    +\sum_{\substack{i,j,k\in I \\  N,T,P \in \mathbb{Z}^D}}({\boldsymbol{\cS}}^{T_1}\frac{\partial g}{\partial u^j_N}){\{{({\boldsymbol{\cS}}^{T_3}\frac{\partial f}{\partial u^i_M})\{{u^i_M}_{\boldsymbol{\lambda}}{u^j_N}\}}_{\boldsymbol{\lambda \mu}}h\}}\notag\\
    &=\sum_{j\in I,N \in \mathbb{Z}^D}{\{{u^j_N}_{\boldsymbol{\lambda \mu \cS}}h\}}_{}\{f_{\boldsymbol{\lambda}}\frac{\partial g}{\partial u^j_N}\}
    +\sum_{\substack{i,j,k\in I \\  N,T,P \in \mathbb{Z}^D}}({\boldsymbol{\cS}}^{T_2}\frac{\partial f}{\partial u^i_M})({\boldsymbol{\cS}}^{T_1}\frac{\partial g}{\partial u^j_N})\frac{\partial h}{\partial u^k_P}{\{{\{{u^i_M}_{\boldsymbol{\lambda}}{u^j_N}\}}_{\boldsymbol{\lambda \mu}}u^k_P\}}\notag\\
    &+\sum_{\substack{i,j,k\in I \\  N,T,P \in \mathbb{Z}^D}}{\{{({\boldsymbol{\cS}}^{T_3}\frac{\partial f}{\partial u^i_M})}_{\boldsymbol{\lambda \mu \cS}}h\}}(\frac{\partial g}{\partial u^j_N}\{{u^i_M}_{\boldsymbol{\lambda}}{u^j_N}\})\notag\\\label{3.4}
\end{align}
where $T_3=N+T-M$ , and the last term in the RHS of (\ref{3.4}) is
\begin{align*}
    &\sum_{\substack{i,j\in I \\  M,N,T \in \mathbb{Z}^D}}{\{{({\boldsymbol{\cS}}^{T_3}\frac{\partial f}{\partial u^i_M})}_{\boldsymbol{\lambda \mu \cS}}h\}}(\frac{\partial g}{\partial u^j_N}\{{u^i_M}_{\boldsymbol{\lambda}}{u^j_N}\})\\
    &=\sum_{\substack{i,l\in I \\  M,N,T,Q \in \mathbb{Z}^D}}({\boldsymbol{\lambda \mu \cS}})^{-T_3}{\{{u^i_M}_{\boldsymbol{\lambda \mu \cS}}h\}}_{}(\frac{\partial^2 f}{\partial u^i_M \partial u^l_Q}{\{{u^l_Q}_{\boldsymbol{\lambda}}g\}}_{(T_3)})
    \\&=\sum_{\substack{i\in I \\  M,N,T \in \mathbb{Z}^D}}{\{{u^i_M}_{\boldsymbol{\lambda \mu \cS}}h\}}_{}({\boldsymbol{\lambda \mu \cS}})^{-T_3}{\{{\frac{\partial f}{\partial u^i_M}}_{\boldsymbol{\lambda}}g\}}_{(T_3)}\\
    &=\sum_{i\in I,M \in \mathbb{Z}^D}{\{{u^i_M}_{\boldsymbol{\lambda \mu \cS}}h\}}_{}{\{{\frac{\partial f}{\partial u^i_M}}_{(\boldsymbol{\mu \cS})^{-1}}g\}}
    =-\sum_{i\in I,M \in \mathbb{Z}^D}{\{{u^i_M}_{\boldsymbol{\lambda \mu \cS}}h\}}_{}\{g_{\boldsymbol{\mu}}\frac{\partial f}{\partial u^i_M}\}.
\end{align*}
We can finally compare the LHS and the RHS of the Jacobi identity: after the cancellation of the identical terms, we are left with
\begin{multline}
\{f_{\mlambda}\{g{}_{\mmu}h\}\}-\{g_{\mmu}\{f{}_{\mlambda}h\}\}-\{\{f_{\mlambda}g\}{}_{\mlambda\mmu}h\}\}=\\
\sum_{\substack{i,j,k\in I \\  N,T,P \in \mathbb{Z}^D}}({\boldsymbol{\cS}}^{T_2}\frac{\partial f}{\partial u^i_M})({\boldsymbol{\cS}}^{T_1}\frac{\partial g}{\partial u^j_N})\frac{\partial h}{\partial u^k_P}\left(\{u^i_M{}_{\mlambda}\{u^j_N{}_{\mmu}u^k_P\}\}-\{u^j_N{}_{\mmu}\{u^i_M{}_{\mmu}u^k_P\}\}-{\{{\{{u^i_M}_{\boldsymbol{\lambda}}{u^j_N}\}}_{\boldsymbol{\lambda \mu}}u^k_P\}}\right),
\end{multline}
where all the terms in the parenthesis vanish because, as proved above, the Jacobi identity holds true for any triple of shifted generators.
\bibliographystyle{plain}
\bibliography{reference}

\begin{thebibliography}{10}

\bibitem{BdSK09}
Aliaa Barakat, Alberto De~Sole, and Victor~G. Kac.
\newblock Poisson vertex algebras in the theory of hamiltonian equations.
\newblock {\em Jpn. J. Math.}, 4(2):141 – 252, 2009.

\bibitem{CCS17}
Guido Carlet, Matteo Casati, and Sergey Shadrin.
\newblock Poisson cohomology of scalar multidimensional {D}ubrovin-{N}ovikov brackets.
\newblock {\em J. Geom. Phys.}, 114:404--419, 2017.

\bibitem{CDZ}
Guido Carlet, Boris Dubrovin, and Youjin Zhang.
\newblock The extended {T}oda hierarchy.
\newblock {\em Mosc. Math. J.}, 4(2):313--332, 534, 2004.

\bibitem{CV25}
M.~Casati and D.~Valeri.
\newblock Multi-component {H}amiltonian difference operators.
\newblock {\em Nonlinearity}, 38(10):105023, 2025.

\bibitem{CW20}
M.~Casati and J.~P. Wang.
\newblock A {D}arboux-{G}etzler theorem for scalar difference hamiltonian operators.
\newblock {\em Comm. Math. Phys.}, 374(3):1497–1529, 2020.

\bibitem{C15}
Matteo Casati.
\newblock On deformations of multidimensional {P}oisson brackets of hydrodynamic type.
\newblock {\em Commun. Math. Phys.}, 335(2):851--894, 2015.

\bibitem{LocalandNon-local}
A.~De~Sole, V.G. Kac, D.~Valeri, and M.~Wakimoto.
\newblock Local and non-local multiplicative poisson vertex algebras and differential-difference equations.
\newblock {\em Commun. Math. Phys. 370, 1019–1068}, 2019.

\bibitem{10.1093/imrn/rny242}
Alberto De~Sole, Victor~G Kac, Daniele Valeri, and Minoru Wakimoto.
\newblock Poisson $\lambda$-brackets for differential–difference equations.
\newblock {\em Int. Math. Res. Not.}, 2020(13):4144--4190, 10 2018.

\bibitem{D89}
B.~A. Dubrovin.
\newblock Differential-geometric {P}oisson brackets on a lattice.
\newblock {\em Funktsional. Anal. i Prilozhen.}, 23(2):57--59, 1989.

\bibitem{FR96}
Edward Frenkel and Nikolai Reshetikhin.
\newblock Quantum affine algebras and deformations of the {V}irasoro and {$\mathcal{W}$}-algebras.
\newblock {\em Comm. Math. Phys.}, 178(1):237--264, 1996.

\bibitem{WangJP}
F.~Khanizadeh, A.V. Mikhailov, and J.P. Wang.
\newblock Darboux transformations and recursion operators for differential-difference equations.
\newblock {\em Theor Math Phys}, (117), 2013.

\bibitem{K85}
B.~A. Kuperschmidt.
\newblock Discrete {L}ax equations and differential-difference calculus.
\newblock {\em Ast\'erisque}, (123):212, 1985.

\bibitem{magri}
F.~Magri.
\newblock A simple model of the integrable hamiltonian equation.
\newblock {\em J. Math. Phys.}, (19):1156--62, 1978.

\bibitem{olver}
P.~J. Olver.
\newblock Bi-{H}amiltonian systems.
\newblock In {\em Ordinary and partial differential equations ({D}undee, 1986)}, volume 157 of {\em Pitman Res. Notes Math. Ser.}, pages 176--193. Longman Sci. Tech., Harlow, 1987.

\bibitem{DSKVW18-1}
A.~De Sole, V.~G. Kac, D.~Valeri, and M.~Wakimoto.
\newblock Poisson $\lambda$ brackets for differential-difference equations.
\newblock {\em IMRN}, 13, 2020.

\bibitem{DSKVW19-9}
Alberto Sole, Victor Kac, Daniele Valeri, and Minoru Wakimoto.
\newblock Local and non-local multiplicative poisson vertex algebras and differential-difference equations.
\newblock {\em Commun. Math. Phys.}, 370, 09 2019.

\bibitem{WIERSMA0}
G.~L. Wiersma and H.~W. Capel.
\newblock Lattice equations, hierarchies and {H}amiltonian structures.
\newblock {\em Phys. A}, 142(1-3):199--244, 1987.

\bibitem{WIERSMA1}
G.~L. Wiersma and H.~W. Capel.
\newblock Lattice equations, hierarchies and {H}amiltonian structures: the {K}adomtsev-{P}etviashvili equation.
\newblock {\em Phys. Lett. A}, 124(3):124--130, 1987.

\bibitem{WIERSMA2}
G.~L. Wiersma and H.~W. Capel.
\newblock Lattice equations, hierarchies and {H}amiltonian structures. {II}. {KP}-type of hierarchies on {$2$}d lattices.
\newblock {\em Phys. A}, 149(1-2):49--74, 1988.

\bibitem{WIERSMA3}
G.~L. Wiersma and H.~W. Capel.
\newblock Lattice equations, hierarchies and {H}amiltonian structures. {III}. {T}he {$2$}d {T}oda and {KP} hierarchies.
\newblock {\em Phys. A}, 149(1-2):75--106, 1988.

\end{thebibliography}

\end{document}